\newtheorem{thm}{Theorem}[section]
\newtheorem{prop}[thm]{Proposition}
\newtheorem{problem}{Problem}
\newtheorem{defn}[thm]{Definition}
\newtheorem{rem}[thm]{Remark}
\begin{document}

\title{\LARGE \bf Simultaneous Intermittent Communication Control and Path Optimization in Networks of Mobile Robots}

\author{Yiannis Kantaros and Michael M. Zavlanos
\thanks{This work is supported in part by the NSF awards CNS $\#$1261828 and CNS $\#$1302284.}
\thanks{Yiannis Kantaros and Michael M. Zavlanos are with the Department of Mechanical Engineering and Materials Science, Duke University, Durham, NC 27708, USA. $\left\{\text{yiannis.kantaros,michael.zavlanos}\right\}$@duke.edu}
}

\maketitle
\thispagestyle{empty}
\pagestyle{empty}
\begin{abstract}
In this paper, we propose an intermittent communication framework for mobile robot networks. Specifically, we consider robots that move along the edges of a connected mobility graph and communicate only when they meet at the nodes of that graph giving rise to a dynamic communication network. Our proposed distributed controllers ensure intermittent connectivity of the network and path optimization, simultaneously. We show that the intermittent connectivity requirement can be encapsulated by a global Linear Temporal Logic (LTL) formula. Then we approximately decompose it into local LTL expressions which are then assigned to the robots. To avoid conflicting robot behaviors that can occur due to this approximate decomposition, we develop a distributed conflict resolution scheme that generates non-conflicting discrete motion plans for every robot, based on the assigned local LTL expressions, whose composition satisfies the global LTL formula. By appropriately introducing delays in the execution of the generated motion plans we also show that the proposed controllers can be executed asynchronously.


\end{abstract}

\section{Introduction}

Communication among robots has been typically modeled using proximity graphs and the communication problem is often treated as preservation of graph connectivity \cite{zavlanos2007potential,sabattini2013decentralized,Zavlanos_IEEETRO08,Zavlanos_IEEE11}. 
Common in the above works is that point-to-point or end-to-end network connectivity is required to be preserved for all time. However, this requirement is often very conservative, since limited resources may hinder robots from accomplishing their assigned goals. Motivated by this fact, in this paper we propose a \textit{distributed intermittent communication protocol} for mobile networks. In particular, we consider that robots move along the edges of a connected mobility graph and communicate only when they meet at the nodes of this graph giving rise to a dynamic communication network. We design distributed controllers that ensure intermittent communication of the network while minimizing at the same time the distance traveled between meeting points. We show that intermittent communication can be captured by a global Linear Temporal Logic (LTL) formula that forces robots to meet infinitely often at the rendezvous points. Given such an LTL expression, existing model checking techniques \cite{baier2008principles,clarke1999model} can be employed in order to implement correct by construction controllers for all robots. 

LTL-based control synthesis and task specification for mobile robots build upon either a bottom-up approach when independent LTL expressions are assigned to robots \cite{loizou2004automatic,kress2009temporal,guo2014cooperative} or top-down approaches when a global LTL describing a collaborative task is assigned to a team of robots \cite{chen2011synthesis,kloetzer2008distributed}, as in our work. Bottom-up approaches generate a discrete high-level motion plan for all robots based on a synchronous product automaton among all agents and, therefore, they are resource demanding and scale poorly with the number of robots. To mitigate these issues, we propose a novel technique that approximately decomposes the global LTL formula into local ones and assigns them to robots. Since the approximate decomposition of the global LTL formula can result in conflicting robot behaviors we develop a distributed conflict resolution scheme that generates discrete motion plans for every robot based on the assigned local LTL expressions. By appropriately introducing delays in the execution of the generated motion plans we show the proposed controllers can also be executed in an asynchronous fashion. In contrast, most relevant literature assumes that robot control is performed in a synchronous way \cite{chen2011synthesis,kloetzer2008distributed}. Asynchronous robot mobility is considered in \cite{ulusoy2013optimality} by introducing `traveling states' in the discretized abstraction of the environment decreasing in this way the scalability of the proposed algorithm.

     
The most relevant works to the one proposed here are presented in \cite{kantaros15asilomar,kantaros16acc,zavlanos2010synchronous,hollinger2010multi}. A centralized intermittent communication control scheme is presented in \cite{kantaros15asilomar} that ensures communication among robots infinitely often but it does not scale well with the number of robots.  In \cite{kantaros16acc} a distributed intermittent communication control scheme is proposed that requires synchronization among robots, unlike the approach developed here. Moreover, \cite{kantaros16acc} considers \textit{a priori} determined communication points while in this paper, rendezvous points are selected optimally, to minimize the distance traveled by the robots. \cite{zavlanos2010synchronous} proposes a distributed synchronization scheme that allows robots that move along the edges of a bipartite mobility graph to meet periodically at the vertices of this graph. Instead, here we make no assumptions on the graph structure on which robots reside or on the communication pattern to be achieved. On the other hand, \cite{hollinger2010multi} proposes a receding horizon framework for periodic connectivity that ensures recovery of end-to-end connectivity within a given time horizon. As the number of robots or the size of the time horizon grows, this approach can become computationally expensive. To the contrary, our proposed method scales well to large numbers of robots and can handle situations where the whole network can not be connected at once, by ensuring connectivity over time, infinitely often.

\section{Problem Formulation}\label{sec:prob}

Consider a team of $N$ robots that move in a workspace $\ccalW\subset\mathbb{R}^n$ according to $\dot{\bbx}_i(t)=\bbu_i(t)$, where $\bbx_{i}(t)\in\mathbb{R}^n$ is the position of robot $i$, $i\in\{1,2,\dots,N\}$, at time $t$ and $\bbu_i(t)\in\mathbb{R}^n$ is a control input. Also, consider $L$ locations in $\ccalW$ denoted by $\ell_j$, $j\in\{1,2\dots,L\}$ located at positions $\bbmu_{j}\in\ccalW$ and paths $\gamma_{ij}:~[0,1]\rightarrow \mathbb{R}^n$ that connect two locations $\ell_i$ and $\ell_j$ such that $\gamma_{ij}(0)=\bbmu_{i}$ and $\gamma_{ij}(1)=\bbmu_{j}$. The union of locations $\bbmu_{j}$ and paths $\gamma_{ij}$ gives rise to an undirected graph $\ccalG=\{\mathcal{V},\mathcal{E}\}$, where the set of nodes $\mathcal{V}=\left\{1,2,\dots,L\right\}$ is indexed by the set of locations $\ell_j$ and the set of edges $\mathcal{E}\subseteq\mathcal{V}\times\mathcal{V}$ is determined by the paths $\gamma_{ij}$ such that an edge $(i,j)\in\ccalE$ exists if and only if a path $\gamma_{ij}$ exists. Hereafter, we call $\ccalG$ a mobility graph and we assume that the robots move along its edges to possibly accomplish a high-level task; see Figure \ref{mobility}. For example, robots may travel along their assigned paths to monitor different parts of a region and then coordinate to transmit their measurements to a user. The presence of a network allows the robots to transmit data in a multi-hop fashion, so that they do not have to leave their assigned region. Applications of this framework involve distributed coverage, state estimation, or surveillance. In what follows, we assume that the mobility graph $\ccalG$ is connected. 

\subsection{Intermittent Communication}\label{sec:intermittent}

We assume that the robotic team is divided into $M$ subgroups. The indices $i$ of robots that belong to the $m$-th subgroup are collected in a set denoted by $\mathcal{T}_m$, $m\in\{1,2,\dots,M\}$ while every robot can belong to more than one subgroups. Robots in a subgroup $\mathcal{T}_m$ can communicate only when all of them are present simultaneously at a common location $\ell_j$. The locations $\ell_j$ where communication can take place for the robotic team $\ccalT_m$ are collected in a set $\ccalC_m$.  This way, a dynamic robot communication graph $\ccalG_c=\{\mathcal{V}_c,\mathcal{E}_c\}$ is constructed where the set of nodes $\ccalV_c$ is indexed by robots, i.e., $\ccalV_c=\left\{1,2,\dots,N\right\}$ and $\ccalE_c\subseteq\ccalV_c\times\ccalV_c$ is the set of communication links that emerge when, e.g., all robots in a team $\ccalT_m$ meet at a common rendezvous point $\ell_j\in\ccalC_m$ simultaneously. Given the robot teams $\ccalT_m$, a graph $\ccalG_{\ccalT}=\{\ccalV_{\ccalT},\ccalE_{\ccalT}\}$ is constructed whose set of nodes $\ccalV_{\ccalT}=\{1,2,\dots,M\}$ is indexed by the teams $\ccalT_m$ and set of edges $\ccalE_{\ccalT}$ consists of links between nodes $m$ and $n$ if $\ccalT_m\cap\ccalT_n\neq\varnothing$; see Figure 1(b). We assume that the graph $\ccalG_{\ccalT}$ is connected in order to ensure dissemination of information in the network. Furthermore, we can define the set of neighbors of node $m\in\ccalV_{\ccalT}$ by $\ccalN_{\ccalT_m}=\left\{n\in\ccalV_{\ccalT}|(n,m)\in\ccalE_{\ccalT}\right\}$. Also, for robot $i$ we define the set that collects the indices of all other robots that belong to common teams with robot $i$ as $\ccalN_i=\{j|j\in\ccalT_m, \forall m\in\ccalS_i\}\setminus\{i\}$, and the set that collects the indices of teams that robot $i$ belongs to as $\ccalS_i=\{m|i\in\ccalT_m,~m\in\{1,2,\dots,M\}\}$.

At a rendezvous point $\ell_j\in\ccalC_m$ communication takes place when all robots in $\ccalT_m$ are present there, simultaneously. Then the communication graph $\ccalG_c$ is defined to be \textit{connected over time} if all robots in $\ccalT_m$ meet infinitely often at a region $\ell_j\in\ccalC_m$, for all $m\in\ccalV_{\ccalT}$. Such a requirement can be captured by the following global LTL expression:

\begin{equation}\label{eq:globalLTL}
\phi=\wedge_{\forall m\in\{1,2,\dots,M\}}\left(\square\Diamond\left(\vee_{\ell_j\in\ccalC_m}(\wedge_{\forall i\in\ccalT_m}\pi_{i}^{\ell_j})\right)\right),
\end{equation}
where $\pi_{i}^{\ell_j}$ is an atomic proposition that is true when robot $i$ is sufficiently close to $\bbmu_{j}$. 
In \eqref{eq:globalLTL}, $\wedge$ and $\vee$ are the conjunction and disjunction operator, respectively, while $\square$ and $\Diamond$ stand for the temporal operators `always' and `eventually', respectively. For more details on LTL, we refer the reader to \cite{baier2008principles,clarke1999model}. In words, the LTL expression in \eqref{eq:globalLTL} requires all robots in a team $\ccalT_m$ to meet infinitely often at at least one communication point $\ell_j\in\ccalC_m$, for all $m\in\ccalV_{\ccalT}$.

\begin{figure}[tb]
    \centering
    \subfigure[Mobility Graph $\ccalG$]{
    \label{mobility}
     \includegraphics[width=0.47\linewidth]{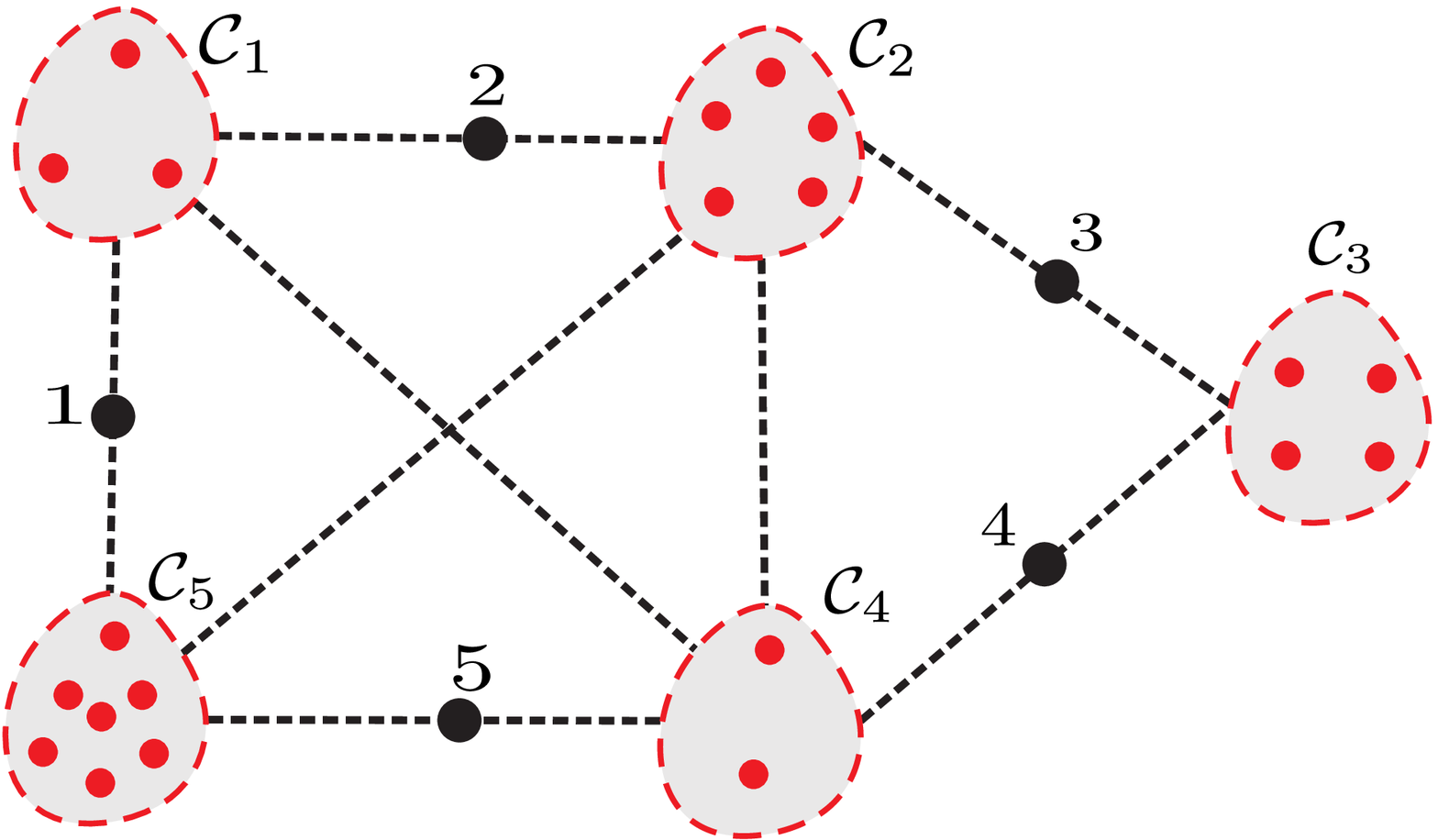}}
      \subfigure[Graph $\ccalG_{\ccalT}$]{
    \label{graphgt}
     \includegraphics[width=0.47\linewidth]{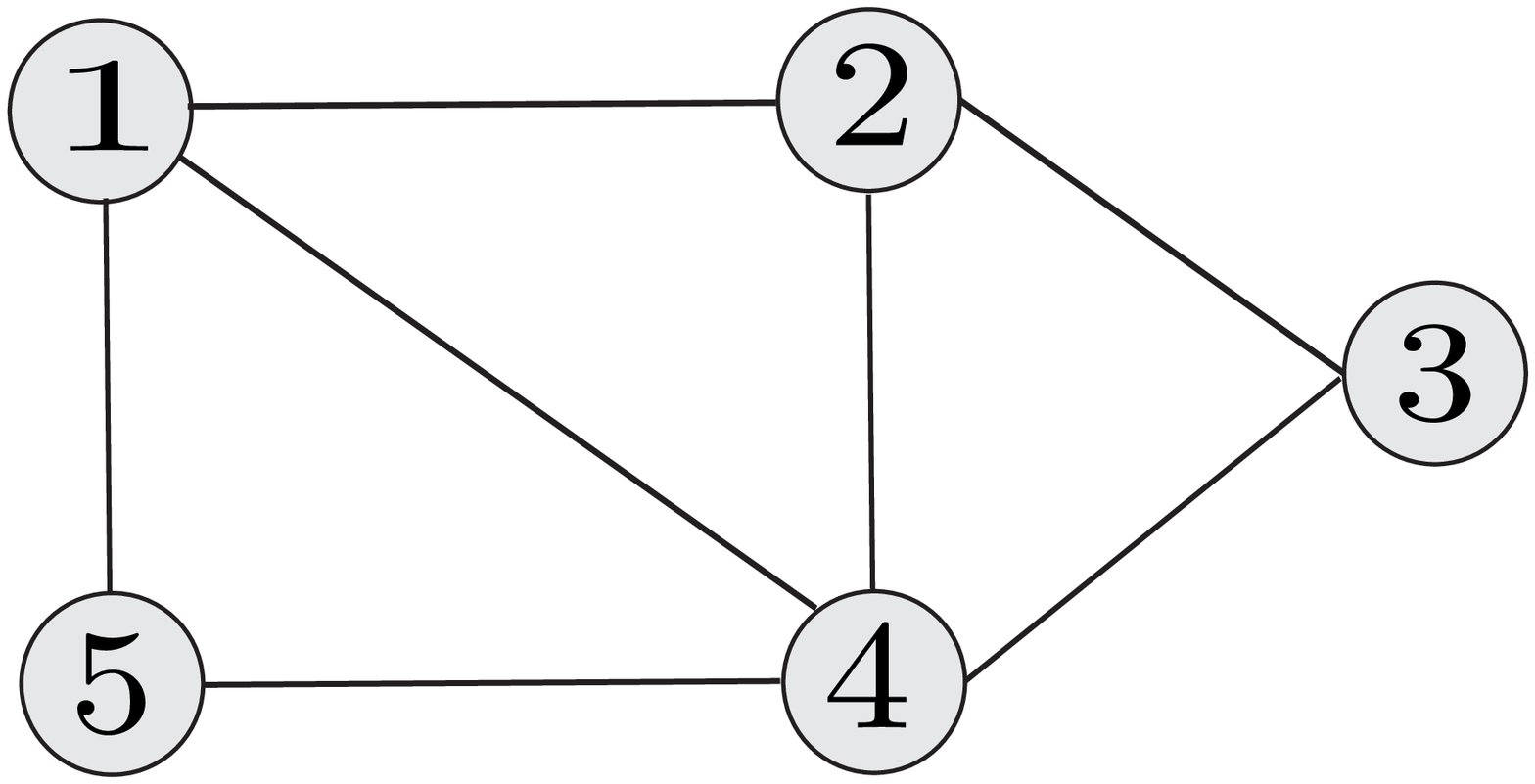}}
  \caption{A graphical illustration of the problem formulation. Figure \ref{mobility} depicts the mobility graph $\ccalG$ for a network of $N=5$ robots (black dots) divided into $M=5$ teams. The robot teams are selected to be: $\ccalT_1=\{1,2\}$, $\ccalT_2=\{2,3\}$, $\ccalT_3=\{3,4\}$, $\ccalT_4=\{2,4,5\}$, and $\ccalT_5=\{1,5\}$. Red dots represent communication points and the black dashed lines between two sets $\ccalC_m$ and $\ccalC_n$ imply that there is at least one path $\gamma_{ij}$ that connects communication points $\ell_i\in\ccalC_m$ and $\ell_j\in\ccalC_n$. Figure \ref{graphgt} depicts the associated graph $\ccalG_{\ccalT}$.}\label{geod}
\end{figure} 

\subsection{Discretized Abstraction of the Workspace}
We can model the environment in which robot $i$ resides by a weighted transition system (wTS) denoted by $\text{wTS}_{i}$ that is defined as follows
\begin{defn}[weighted Transition System]\label{defn:wTS}
A weighted \textit{Transition System} $\text{wTS}_{i}$ is a tuple $\left(\ccalQ_{i}, q_{i}^0,\ccalA_{i},\rightarrow_{i}, w_i, \mathcal{AP},L_{i}\right)$ where: (a) $\ccalQ_{i}=\{q_{i}^{\ell_j}\}_{\forall\ell_j\in\ccalC_m,\forall m\in\ccalS_i}$ is the set of states, where a state $q_{i}^{\ell_j}$ indicates that robot $i$ is at $\ell_j$; (b) $q_{i}^0\in\ccalQ_{i}$ is the initial state of robot $i$; (c) $\ccalA_{i}$ is a set of actions. The available actions at state $q_{i}^{\ell_j}$ are ``wait'' and ``go to state $q_{i}^{\ell_k}$'' for every $k$ such that $(j,k)\in\ccalE$;\footnote{Throughout the rest of the paper, for the sake of simplicity we assume that the mobility graph $\ccalG$ is constructed so that there is a path from any location $\ell_j$ to any $\ell_k$, where $q_i^{\ell_j}\in\ccalQ_i$ and $q_i^{\ell_k}\in\ccalQ_i$, that does not pass though a location $\ell_e\in\ccalC_m$ for some $m\notin\ccalS_i$, for all robots $i$. This assumption can be relaxed if we incorporate all communication points $\ell_j$ in the set of states $\ccalQ_i$. }(d) $\rightarrow_{i}\subseteq\ccalQ_{i}\times\ccalA_{i}\times\ccalQ_{i}$ is the transition relation; (e) $w_{i}:\ccalQ_{i}\times\ccalQ_{i}\rightarrow \mathbb{R}_+$ is a cost function that assigns weights/cost to each possible transition in wTS. These costs are associated with the distance between two states $q_i^{\ell_j}$ and $q_i^{\ell_k}$; (f) $\mathcal{AP}$ is the set of atomic propositions; and (h) $L_{i}:\ccalQ_{i}\rightarrow 2^{\mathcal{AP}}$ is an observation/output relation giving the set of atomic propositions that are satisfied in a state. 
\end{defn} 

In what follows we give definitions related to $\text{wTS}_{i}$, that we will use throughout the rest of the paper.

\begin{defn}[Infinite Path]\label{def:infpath}
An \textit{infinite path} $\tau_{i}$ of $\text{wTS}_{i}$ is an infinite sequence of states, $\tau_{i}=\tau_{i}(1)\tau_{i}(2)\tau_{ij}(3)\dots$ such that $\tau_{i}(1)=q_{i}^0$, $\tau_{i}(n)\in\ccalQ_{i}$, and $(\tau_{i}(n),a_{i},\tau_{i}(n+1))\in\rightarrow_{i}$, for some $a_{i}\in\ccalA_{i}$, $\forall n$.\footnote{A \textit{finite path} of $\text{wTS}_{i}$ can be defined accordingly. The only difference with the infinite path is that a finite path is defined by an finite sequence of states of $\text{wTS}_{i}$.}
\end{defn}

\begin{defn}[Composition]\label{def:compos}
Composition of $M$ infinite paths $\tau_m=\tau_m(1)\tau_m(2)\tau_m(3)\dots$, where $m\in\left\{1,\dots,M\right\}$, denoted by $\tau=\otimes_{\forall m}\tau_m$ is an infinite sequence of states defined as $\tau=\tau(1)\tau(2)\dots=\left[\tau(n)\right]_{n=1}^{\infty}$, where $\tau(n)=(\tau_1(n),\tau_2(n),\dots,\tau_M(n))$. 
\end{defn}

\begin{defn}[Cost]\label{def:cost}
The cost of an infinite path $\tau_i$ of a $\text{wTS}_i$ is $J_i(\tau_i)=\sum\nolimits_{n=1}^{\infty}w_i(\tau_i(n),\tau_i(n+1))$. Similarly, the cost incurred by the composition of $N$ infinite paths $\tau_i$ denoted by $\tau=\otimes_{\forall i\in\{1,2,\dots,N\}}\tau_i$ is given by
\end{defn}
\vspace{-1mm}
\begin{align}\label{eq:cost1}
J(\tau)&=
\sum\nolimits_{n=0}^{\infty}\sum\nolimits_{i=1}^{N}w_i(\tau_i(n),\tau_i(n+1)),
\end{align} 
%

\begin{defn}[Projection]
For an infinite path $\tau=\tau(1)\tau(2)\tau(3)\dots$, we denote by $\Pi|_{\text{wTS}_{i}}\tau$ its projection onto $\text{wTS}_{i}$, which is obtained by erasing all states in $\tau$ that do not belong to $\ccalQ_{i}$. 
\end{defn}

\begin{defn}[Trace of infinite path]\label{def:tracet}
The trace of an infinite path $\tau_{i}=\tau_{i}(1)\tau_{i}(2)\tau_{i}(3)\dots$ of a wTS $\text{wTS}_{i}$, denoted by $\texttt{trace}(\tau_{i})$, is an infinite word that is determined by the sequence of atomic propositions that are true in the states along $\tau_{i}$, i.e., $\texttt{trace}(\tau_{i})=L_{i}(\tau_{i}(1))L_{i}(\tau_{i}(2))\dots$. 
\end{defn}


\begin{defn}[Motion Plan]\label{def:motionplan}
Given an LTL formula $\phi$, a wTS $\text{wTS}_{i}$ both defined over the set of atomic propositions $\mathcal{AP}$, an infinite path $\tau_{i}$ of $\text{wTS}_{i}$ is called \textit{motion plan} if and only if $\texttt{trace}(\tau_{i})\in\texttt{Words}(\phi)$, where $\texttt{Words}(\phi)=\left\{\sigma\in (2^{\mathcal{AP}})^{\omega}|\sigma\models\phi\right\}$ is defined as the set of words $\sigma\in (2^{\mathcal{AP}})^{\omega}$ that satisfy the LTL $\phi$ and $\models\subseteq (2^{\mathcal{AP}})\times\phi$ is the satisfaction relation. The relation $\texttt{trace}(\tau_{i})\in\texttt{Words}(\phi)$ is equivalently denoted by $\tau_{i}\models\phi$.
\end{defn}

The problem we address in this paper can be stated as:

\begin{problem}\label{pr:pr1}
Given any initial configuration of the robots in the mobility graph $\ccalG$ determine motion plans $\tau_{i}$ for all robots $i$ that satisfy global LTL expression given in \eqref{eq:globalLTL}, i.e., communication graph $\ccalG_c$ is connected over time and minimize the total distance traveled by robots captured by the objective function \eqref{eq:cost1}.
\end{problem}

\section{Intermittent Communication Control}\label{sec:control}

To solve Problem \ref{pr:pr1}, known centralized model checking techniques can be employed, that typically rely on a discretized abstraction of the environment captured by a wTS and the construction of a synchronized product system among all robots of the network. As a result, such approaches are resource demanding and scale poorly with the size of the network. Therefore, a distributed solution is preferred whereby discrete high-level motion plans for every robot can be computed locally across the network. For this purpose, notice first that although the global LTL formula \eqref{eq:globalLTL} is not decomposable with respect to robots, it can be decomposed in local LTL formulas $\phi_{\ccalT_m}$ associated with a robot team $\ccalT_m$, which are coupled with each other by the conjunction operator $\wedge$. Specifically, we can write $\phi=\wedge_{m\in\ccalV_{\ccalT}}\phi_{\ccalT_m}$, 
%
where $\phi_{\ccalT_m}$ is defined as
\begin{equation}\label{eq:localLTL}
\phi_{\ccalT_m}= \square\Diamond\left(\vee_{\ell_j\in\ccalC_m}(\wedge_{\forall i\in\ccalT_m}\pi_{i}^{\ell_j})\right),
\end{equation}
and forces all robots $i\in\ccalT_m$ to meet infinitely often at at least one rendezvous point $\ell_j\in\ccalC_m$.  

Given the decomposition of $\phi$ into local LTL formulas $\phi_{\ccalT_m}$, every robot $i\in\ccalT_m$ needs to develop motion plans $\tau_{i}$ so that the composition of plans $\tau_{i}$, $\forall i\in\ccalT_m$ denoted by $\tau_{\ccalT_m}=\otimes_{i\in\ccalT_m}\tau_{i}$ satisfies the local LTL expression $\phi_{\ccalT_m}$, for all $m\in\ccalS_i$. In this way, we can ensure that the composition of $\tau_{i}$, $\forall i\in\{1,2\dots,N\}$, satisfies the global LTL expression \eqref{eq:globalLTL}, since all local LTL expressions $\phi_{\ccalT_m}$ are satisfied. 

Motion plans $\tau_{\ccalT_m}\models\phi_{\ccalT_m}$, $\forall m\in\ccalV_{\ccalT}$, can be constructed using existing tools from model checking theory \cite{baier2008principles,clarke1999model}. However, notice that constructing plans $\tau_{\ccalT_m}$ and $\tau_{\ccalT_n}$, $\forall n\in\ccalN_{\ccalT_m}$ independently cannot ensure that the robots' behavior in the workspace will satisfy the global LTL formula \eqref{eq:globalLTL}. The reason is that the local LTL formulas $\phi_{\ccalT_m}$ in \eqref{eq:localLTL} are not independent from the local LTL expressions $\phi_{\ccalT_n}$ for which it holds $n\in\ccalN_{\ccalT_m}$, since they are coupled by robots' state in their respective transition systems. For instance, assume that $n\in\ccalN_{\ccalT_m}$ and that robot $i$ belongs to teams $\ccalT_m$ and $\ccalT_n$. Then robot $i$ is responsible for communicating with the other robots that belong to teams $\ccalT_m$ and $\ccalT_n$ at a communication point in $\ccalC_m$ and $\ccalC_n$, respectively. This equivalently implies that the LTL expressions $\phi_{\ccalT_m}$ and $\phi_{\ccalT_n}$ are coupled due to robot $i$ through the atomic propositions $\pi_{i}^{\ell_j}$, $\forall \ell_j\in\ccalC_m$ and $\pi_{i}^{\ell_k}$, $\forall \ell_k\in\ccalC_n$. Consequently, generating plans $\tau_{\ccalT_m}\models\phi_{\ccalT_m}$ that ignore the LTL expressions $\phi_{\ccalT_n}$, $\forall n\in\ccalN_{\ccalT_m}$ may result in conflicting robot behaviors, since the projection of motion plans $\tau_{\ccalT_m}$ and $\tau_{\ccalT_n}$ onto $\text{wTS}_{i}$ may result in two different motion plans $\tau_{i}$ for a robot $i\in\ccalT_m\cap\ccalT_n$, $n\in\ccalN_{\ccalT_m}$. This means that cases where a robot $i$ needs to be in more than one states in $\text{wTS}_{i}$ simultaneously may occur. 

To circumvent these issues, we propose a distributed algorithm in Section \ref{sec:conflict} that implements free-of-conflict discrete motion plans $\tau_{i}$, for all robots $i$, so that the global LTL expression $\phi$ is satisfied. These motion plans will be constructed based on the prefix parts of motion plans $\tau_{\ccalT_m}\models\phi_{\ccalT_m}$, constructed in Section \ref{sec:autmodel}, for all $m\in\ccalS_i$.   

\subsection{Optimal Automata-based Model Checking}\label{sec:autmodel}
 
Given an LTL formula $\phi_{\ccalT_m}$ and the $\text{wTS}_{i}$ of all robots $i\in\ccalT_m$ a motion plan $\tau_{\ccalT_m}\models\phi_{\ccalT_m}$ can be implemented using existing automata-based model checking methods \cite{baier2008principles,clarke1999model}. First the \textit{weighted Product Transition System} (\text{wPTS}) $\text{wPTS}_{\ccalT_m}$ is constructed, which essentially captures all the possible combinations of robots' states in their respective $\text{wTS}_{i}$, $\forall i\in\ccalT_m$ and is defined as follows:

\begin{defn}[Product Transition System]
Given $|\ccalT_m|$ weighted Transition Systems $\text{wTS}_{i_k}=\left(\ccalQ_{i_k}, q_{i_k}^0,\ccalA_{i_k},\rightarrow_{i_k},w_{i_k},\mathcal{AP},L_{i_k}\right)$, where $i_k\in\ccalT_m$, $k=1,2,\dots,|\ccalT_m|$, the \textit{weighted Product Transition System} $\text{wPTS}_{\ccalT_m}=\text{wTS}_{i_1}\otimes\text{wTS}_{i_2}\otimes\dots\otimes\text{wTS}_{i_{|\ccalT_m|}}$ is a tuple $\left(\ccalQ_{\ccalT_m}, q_{\ccalT_m}^0,\ccalA_{\ccalT_m},\longrightarrow_{\ccalT_m},w_{\ccalT_m},\mathcal{AP},L_{\ccalT_m}\right)$ where: (a)$\ccalQ_{\ccalT_m}=\ccalQ_{i_1}\times\ccalQ_{i_2}\times\dots\times\ccalQ_{i_{|\ccalT_m|}}$ is the set of states; (b) $q_{\ccalT_m}^0=(q_{i_1}^0,q_{i_2}^0,\dots,q_{i_{|\ccalT_m|}}^0)\in\ccalQ_{\ccalT_m}$ is the initial state; (c)$\ccalA_{\ccalT_m}=\ccalA_{i_1}\times\ccalA_{i_2}\times\dots\times\ccalA_{i_{|\ccalT_m|}}$ is a set of actions; (d)$\longrightarrow_{\ccalT_m}\subseteq\ccalQ_{\ccalT_m}\times\ccalA_{\ccalT_m}\times\ccalQ_{\ccalT_m}$ is the transition relation defined by the rule\footnote{The notation of this rule is along the lines of the notation used in \cite{baier2008principles}. In particular, it means that if the proposition above the solid line is true, then so does the proposition below the solid line.} $\frac{\bigwedge _{\forall i_k}\left(q_{i_k}\xrightarrow{a_{i_k}}_{i_k}q_{i_k}^{'}\right)}{q_{\ccalT_m}\xrightarrow{a_{\ccalT_m}=\left(a_{i_1},\dots,a_{i_{|\ccalT_m|}}\right)}_{\ccalT_m}q_{\ccalT_m}^{'}}$;\footnote{The state $q_{\ccalT_m}$ stands for the state $\left(q_{i_1},\dots,q_{i_{|\ccalT_m|}}\right)\in\ccalQ_{\ccalT_m}$. The state $q_{\ccalT_m}^{'}$ is defined accordingly.}
(e)$w_{\ccalT_m}(q_{\ccalT_m},q_{\ccalT_m}^{'})=\sum_{k=1}^{\ccalT_m} w_i(\Pi|_{\text{wTS}_{i_k}}q_{\ccalT_m},\Pi|_{\text{wTS}_{i_k}}q_{\ccalT_m}^{'})$; (f)$\mathcal{AP}$ is the set of atomic propositions; and,
(h)$L_{\ccalT_m}=\bigcup_{i_k\in\ccalT_m}L_{i_k}$ is an observation/output relation giving the set of atomic propositions that are satisfied at a state. 
\end{defn}

Next the LTL formula $\phi_{\ccalT_m}$ is translated into a Nondeterministic B$\ddot{\text{u}}$chi Automaton (NBA) over $2^{\mathcal{AP}}$ denoted by $B_{\ccalT_m}$ \cite{vardi1986automata}, which is defined as follows:
\begin{defn}\label{def:buchi}
A \textit{Nondeterministic B$\ddot{\text{u}}$chi Automaton} (NBA) $B_{\ccalT_m}$ over $2^{\mathcal{AP}}$ is defined by the tuple $B_{\ccalT_m}=\left(\ccalQ_{B_{\ccalT_m}}, \ccalQ_{B_{\ccalT_m}}^0,2^{\mathcal{AP}},\rightarrow_{B_{\ccalT_m}},\mathcal{F}_{B_{\ccalT_m}}\right)$ where: (a) $\ccalQ_{B_{\ccalT_m}}$ is the set of states; (b) $\ccalQ_{B_{\ccalT_m}}^0\subseteq\ccalQ_{B_{\ccalT_m}}$ is a set of initial states; (c) $\Sigma=2^{\mathcal{AP}}$ is an alphabet; (d) $\rightarrow_{B_{\ccalT_m}}\subseteq\ccalQ_{B_{\ccalT_m}}\times \Sigma\times\ccalQ_{B_{\ccalT_m}}$ is the transition relation; and (e) $\ccalF_{B_{\ccalT_m}}\subseteq\ccalQ_{B_{\ccalT_m}}$ is a set of accepting/final states. 
\end{defn}

Once the wPTS $\text{wPTS}_{\ccalT_m}$ and the NBA $B_{\ccalT_m}$ that corresponds to the LTL $\phi_{\ccalT_m}$ are constructed, a motion plan $\tau_{\ccalT_m}\models\phi_{\ccalT_m}$ can be found by checking the non-emptiness of the language of the \textit{Product B$\ddot{\text{u}}$chi Automaton} (PBA) $P_{\ccalT_m}=\text{wPTS}_{\ccalT_m}\otimes B_{\ccalT_m}$ \cite{baier2008principles}, which is defined as follows:

\begin{defn}[Product B$\ddot{\text{u}}$chi Automaton]
Given the product transition system $\text{wPTS}_{\ccalT_m}=\left(\ccalQ_{\ccalT_m}, q_{\ccalT_m}^0,\ccalA_{\ccalT_m},\longrightarrow_{\ccalT_m},w_{\ccalT_m},\mathcal{AP},L_{\ccalT_m}\right)$  and the NBA $B_{\ccalT_m}=\left(\ccalQ_{B_{\ccalT_m}}, \ccalQ_{B_{\ccalT_m}}^0,2^{\mathcal{AP}},\rightarrow_{B_{\ccalT_m}},\mathcal{F}_{B_{\ccalT_m}}\right)$, the \textit{Product B$\ddot{\text{u}}$chi Automaton} (PBA) $P_{\ccalT_m}=\text{wPTS}_{\ccalT_m}\otimes B_{\ccalT_m}$ is a tuple $\left(\ccalQ_{P_{\ccalT_m}}, \ccalQ_{{P_{\ccalT_m}}}^0,\longrightarrow_{P_{\ccalT_m}},\ccalF_{P_{\ccalT_m}}\right)$ where: (a) $\ccalQ_{P_{\ccalT_m}}=\ccalQ_{\ccalT_m}\times\ccalQ_{B_{\ccalT_m}}$ is the set of states; (b) $\ccalQ_{P_{\ccalT_m}}^0=q_{\ccalT_m}^0\times\ccalQ_{B_{\ccalT_m}}^0$ is a set of initial states; (c)   $\longrightarrow_{P_{\ccalT_m}}\subseteq\ccalQ_{P_{\ccalT_m}}\times\ccalA_{\ccalT_m}\times 2^{\mathcal{AP}}\times\ccalQ_{P_{\ccalT_m}}$ is the transition relation defined by the rule: $\frac{\left(q_{\ccalT_m}\xrightarrow{a_{\ccalT_m}}q_{\ccalT_m}^{'}\right)\wedge\left( q_{B_{\ccalT_m}}\xrightarrow{L_{\ccalT_m}\left(q_{\ccalT_m}^{'}\right)}q_{B_{\ccalT_m}}^{'}\right)}{q_{P_{\ccalT_m}}=\left(q_{\ccalT_m},q_{B_{\ccalT_m}}\right)\xrightarrow{a_{\ccalT_m}}_{P_{\ccalT_m}}q_{P_{\ccalT_m}}^{'}=\left(q_{\ccalT_m}^{'},q_{B_{\ccalT_m}}^{'}\right)}$; (d) $\ccalF_{P_{\ccalT_m}}=\ccalQ_{\ccalT_m}\times\ccalF_{B_{\ccalT_m}}$ is a set of accepting/final states. 
\end{defn} 

To check the non-emptiness of the language of $P_{\ccalT_m}$ denoted by $\ccalL_{P_{\ccalT_m}}=\texttt{trace}(\text{wPTS}_{\ccalT_m})\cap\mathcal{L}_{B_{\ccalT_m}}$ and to find the motion plan that both satisfies $\phi_{\ccalT_m}$ and at the same time minimizes $J(\tau_{\ccalT_m})$, we can employ existing model checking methods that are based on graph search algorithms; see, e.g., \cite{guo2013motion, guo2013reconfiguration}. Such motion plans can be written in a prefix-suffix structure $\tau_{\ccalT_m}=\tau^{\text{pre}}_{\ccalT_m}[\tau^{\text{suf}}_{\ccalT_m}]^{\omega}$, where the prefix part $\tau^{\text{pre}}_{\ccalT_m}$ is executed only once and the suffix part $\tau^{\text{suf}}_{\ccalT_m}$ is repeated infinitely. In principle, in these approaches the PBA is viewed as a weighted directed graph with weights assigned on each edge that are inherited by the function $w_{\ccalT_m}$. Then finding the shortest path from an initial state to a final state and projecting this path onto $\text{wPTS}_{\ccalT_m}$ results in the prefix part $\tau^{\text{pre}}_{\ccalT_m}$. The suffix part $\tau^{\text{suf}}_{\ccalT_m}$ is constructed similarly by computing the shortest cycle around that final state.  
 
\subsection{Conflict Resolution Coordination}\label{sec:conflict}

As discussed in the beginning of Section \ref{sec:control}, constructing motion plans $\tau_{\ccalT_m}$ for all $m\in\ccalS_i$, for all robots $i$ can result in conflicting robot behaviors. To overcome this issue, we propose a distributed algorithm that resolves any conflicts in the robot behavior introduced by the motion plans $\tau_{\ccalT_m}$ and constructs free-of-conflicts motion plans $\tau_{i}$ for all robots $i$ using the prefix parts $\tau^{\text{pre}}_{\ccalT_m}$ constructed in Section \ref{sec:autmodel}. The general form of these motion plans is  
 
{\small\begin{align}\label{eq:planij}
\tau_{i}=&\tau_i(1)\tau_i(2)\dots=[\tau_i(n)]_{n=1}^{\infty}\nonumber\\=&\bigg[X\dots X\Pi|_{\text{wTS}_{i}}p_{\ccalT_{m_1}}^k X\dots X\Pi|_{\text{wTS}_{i}}p_{\ccalT_{m_j}}^k \nonumber\\&X\dots X\Pi|_{\text{wTS}_{i}}p_{\ccalT_{m_{\left|\ccalS_i\right|}}}^k X\dots X\bigg]_{k=1}^{\infty}=\left[p_{i}^k\right]_{k=1}^{\infty},
\end{align} }
such that $\tau_i(1)=q_i^0$. In \eqref{eq:planij}, $p_{\ccalT_{m_j}}^k$ is a finite path of $\text{wPTS}_{\ccalT_m}$, where $m_j\in\ccalS_i$ and $j\in\{1,\dots,\left|\ccalS_i\right|\}$. Also, $X$ stands for a finite path in which robot $i$ waits at its current state in $\text{wTS}_i$. In \eqref{eq:planij}, the concatenation of the paths $X$ and $p_{\ccalT_{m_j}}^k$, $\forall m_j\in\ccalS_i$ gives rise to the finite path $p_i^k$. Hereafter, the $e$-th finite path in $p_i^k$ is denoted by $p_i^{k,e}$ where $1\leq e\leq \ell$, where $\ell$ stands for the number of finite paths that appear in $p_i^k$. The parameter $\ell$ is \textit{a priori} selected to be $\ell=\max\left\{d_{\ccalT_m}\right\}_{m=1}^M+1$ for all robots, where $d_{\ccalT_m}$ denotes the degree of vertex $m$ in the graph $\ccalG_{\ccalT}$. This particular choice for the parameter $\ell$ ensures the construction of free-of-conflict motion plans, as it will shown in Proposition \ref{prop:ell}. Moreover, we denote by $p_i^k(n)$ the $n$-th state in the finite path $p_i^k$, e.g., $p_i^1(1)=p_i^{1,1}(1)=\tau_i(1)=q_i^0$, by construction of $\tau_i$. The same notation extends to the infinite path $\tau_i$.

In what follows we first describe the construction of the finite paths $p_{\ccalT_{m_j}}^k$ and then we show how these finite paths are ordered in $p_i^k$ giving rise to a free-of-conflict motion plan $\tau_i$. First, for the finite path $p_{\ccalT_{m_j}}^k$ it holds that $p_{\ccalT_{m_j}}^k=\tau^{\text{pre},k}_{\ccalT_{m_j}}$, where $\tau^{\text{pre},k}_{\ccalT_{m_j}}$ is the prefix part of the motion plan $\tau_{\ccalT_{m_j}}^k\models\phi_{\ccalT_{m_j}}$ constructed as per Section \ref{sec:autmodel}. The index $k$ is introduced in $\tau^{\text{pre},k}_{\ccalT_{m_j}}$ to point out that the prefix part is recomputed as the index $k$, introduced in (4) changes. The reason it needs to be recomputed is because the initial state of $\text{wPTS}_{\ccalT_{m_j}}$ changes as $k$ changes. Particularly, the initial state of $\text{wPTS}_{\ccalT_{m_j}}$ is the state $q^{0,k}_{\ccalT_{m_j}}=p_{\ccalT_{m_j}}^k(1)$. The state $\Pi|_{\text{wTS}_{i}}p_{\ccalT_{m_j}}^k(1)$ is selected so that all sub-paths $\Pi|_{\text{wTS}_{i}}p_{\ccalT_{m_j}}^k$ in $p_i^k$ chain up consistently. Hence, we select the state $\Pi|_{\text{wTS}_{i}}p_{\ccalT_{m_j}}^k(1)$ to be the final state of the previous sub-path that appeared in $p_i^k$. Consequently, the state $p_i^k(1)$ coincides with the the final state of the finite path $p_i^{k-1}$. If $k=1$, then $p_i^k(1)$ refers to the initial position of robot $i$ in the workspace.

The finite paths $p_{i}^k$ are constructed sequentially across the nodes $\ell_j\in\ccalV$, as follows. Let $\ccalS=\{\ell_1,\dots,\ell_j,\dots\}$ be an ordered sequence of the nodes in the mobility graph $\ccalG$, so that consecutive nodes (communication points) $\ell_j$, $\ell_e$ in $\ccalS$ are associated with teams $\ccalT_n$ and $\ccalT_m$, respectively, that belong to neighboring nodes in the graph $\ccalG_{\ccalT}$, i.e.,  $\ell_e\in\ccalC_m$, $\ell_j\in\ccalC_n$ and $m\neq n$ and $m\in\ccalN_{\ccalT_n}$. We assume that $\ccalS$ is known by all robots and that every robot $i$ is initially located at the first communication point 
$\ell_e\in\ccalC_m$, $m\in\ccalS_i$ that appears in $\ccalS$. Assume that paths have been constructed for all nodes in $\ccalS$ that precede $\ell_e\in\ccalC_m$ and that currently all robots $i\in\ccalT_m$ are located at node $\ell_e$ and coordinate to construct the paths $p_{i}^k$. Since the mobility graph $\ccalG$ is connected consecutive nodes in $\ccalS$ are connected by a path in $\ccalG$, this means that there is at least one robot $j\in\ccalT_n\cap\ccalT_m$, $n\in\ccalN_{\ccalT_m}$, which previously constructed its path $p_{j}^k$ by placing at its $n_{j}^{\ccalT_m}$-th entry of the finite path $\Pi|_{\text{wTS}_{j}}p_{\ccalT_m}^k$, i.e., $p_{j}^{k,n_{j}^{\ccalT_m}}=\Pi|_{\text{wTS}_{j}}p_{\ccalT_m}^k$. Then robot $i$ constructs the path $p_{i}^k$ based on three rules. According to the first rule, the path $\Pi|_{\text{wTS}_{i}}p_{\ccalT_m}^k$ will be placed at the $n_{i}^{\ccalT_m}$-th entry, which is selected to be equal to $n_{j}^{\ccalT_m}$, which is common for all robots $j\in\ccalT_m$ [line 1, Alg. \ref{alg:plan}]. This ensures that robot $i$ and all other robots $j\in\ccalT_m$ will meet at a communication point that belongs to $\ccalC_m$, as it will be shown in Proposition \ref{prop:satsynch}. The next step is to place the paths $\Pi|_{\text{wTS}_{i}}p_{\ccalT_g}^k$ for all $g\in\ccalS_i\setminus\{m\}$, at the $n_{i}^{\ccalT_g}$-th entry of $p_{i}^{k}$. The index $n_{i}^{\ccalT_g}$ will be determined by one of the two following rules. Specifically, according to the second rule if there exist robots $j\in\ccalN_{i}\cap\ccalT_g$ that have already constructed the paths $p_{j}^k$, then the index $n_{i}^{\ccalT_g}$ is selected to be equal to $n_{j}^{\ccalT_g}$, which is common for all $j\in\ccalT_g$ [line 4, Alg. \ref{alg:plan}]. Otherwise, according to the third rule the path $\Pi|_{\text{wTS}_{i}}p_{\ccalT_g}^k$ can be placed  at any free entry of $p_{i}^k$ indexed by $n_{i}^{\ccalT_g}$, provided that the $n_{i}^{\ccalT_g}$-th entry of all paths $p_{b}^k$ of robots $b\in\ccalN_{i}$ that have already been constructed does not contain states $\Pi|_{_{\text{wTS}_{b}}}p_{\ccalT_h}(k)$ with $h\in\ccalN_{\ccalT_g}$ [line 6, Alg. \ref{alg:plan}]. To highlight the role of this rule assume that $h\in\ccalN_{\ccalT_g}$. Then this means that there exists at least one robot $r\in\ccalT_h\cap\ccalT_g$. Then notice that without the third rule [line 5], at a subsequent iteration of this procedure, robot $r\in\ccalT_h\cap\ccalT_g$ would have to place the paths $\Pi|_{\text{wTS}_{r}}p_{\ccalT_g}^k$ and $\Pi|_{\text{wTS}_{r}}p_{\ccalT_h}^k$ at a common entry of $p_{r}^k$, i.e., $n_{r}^{\ccalT_g}=n_{r}^{\ccalT_h}$, due to the two previous rules and, therefore, a conflicting behavior for robot $r$ would occur. In all the remaining entries of $p_{i}^k$, $X$s are placed [line 8, Alg. \ref{alg:plan}].\footnote{If $\ell_j=\ell_1\in\ccalC_q$, for some $q\in\ccalV_{\ccalT}$, then initially, a randomly selected robot $j\in\ccalT_q$ creates arbitrarily its path $p_{j}^k$ by placing the paths $\Pi|_{\text{wTS}_{j}}p_{\ccalT_q}^k$  at the $n_{j}^{\ccalT_q}$-th entry of $p_{j}^k$, for all $q\in\ccalS_j$. Then the procedure previously described follows. Moreover, depending on the structure of the graph $\ccalG_{\ccalT}$ it is possible that a communication point $\ell_j\in\ccalC_m$ appears more than once in $\ccalS$. In this case, robots $i\in\ccalT_m$ construct the finite paths $p_{i}^k$ only the first time that $\ell_j$ appears in $\ccalS$.} This procedure is repeated until all robots $i\in\ccalT_m$ have constructed their respective paths $p_{i}^k$. Once this happens, all robots $i\in\ccalT_m$ depart from node $\ell_e\in\ccalC_m$ and travel to the next communication point $\ell_c\in\ccalC_v$, that appears in $\ccalS$ that satisfies $v\in\ccalS_i$  [line 9, Alg. \ref{alg:plan}]. At that point, all robots associated with the next communication point in $\ccalS$ are present at that node, and can coordinate to compute their respective paths, as before. The procedure is repeated sequentially over the nodes in $\ccalS$ until all robots have computed their paths. 

When all robots have constructed their finite paths, they exchange a set of indices denoted by $\ccalX_{i}$ that collects the indices $n_{i}^X$ at which $p_{i}^{k,n_{i}^X}=X$. If there exist paths $p_{i}^{k,n_{i}^X}=X$, for some $n_{i}^X\in\bigcap_{\forall i}\ccalX_{i}$, they are discarded, since in these paths all robots $i$ wait at their current states. Also, notice that in general, the finite paths $p_i^{k,e}$ for some $e\in\{1,\dots,\ell\}$ may have different lengths across the robots $i$. Consequently, this implies that two robots $i,j$ that belong to a team $\ccalT_m$ may start executing the finite paths $\Pi|_{\text{wTS}_{i}}p_{\ccalT_m}^k$ and $\Pi|_{\text{wTS}_{j}}p_{\ccalT_m}^k$ at different time instants, assuming that the robots pick synchronously their next states in their transition systems. Avoiding such a case is crucial to ensure intermittent communication within team $\ccalT_m$, as it will be shown in Proposition \ref{prop:satsynch}. Therefore, given any index $e$ we can introduce states at the end of the finite paths $p_i^{k,e}$ where the robots wait in their current states so that all finite paths $p_i^{k,e}$, for all robots $i$, have the same length. Communication between the robots in the last two stages of the algorithm can happen in the order defined by $\ccalS$, as before. 

\begin{rem}
Note that communication according to $\ccalS$ is very predictable and inefficient as it, e.g., does not allow for simultaneous meetings at the nodes of $\ccalG$. For these reasons it is only used to construct conflict-free motion plans that allow for much more efficient intermittent communication between robots. 
\end{rem}

\begin{rem}[Optimality]
To construct the motion plans $\tau_i$ in \eqref{eq:planij}, we first decouple the global LTL formula \eqref{eq:globalLTL} into local LTL expressions $\phi_{\ccalT_m}$. Then, we construct finite paths $p_{\ccalT_{m_j}}^k=\tau_{\ccalT_{m}}^{\text{pre},k}$, where $\tau_{\ccalT_{m}}^k\models\phi_{\ccalT_m}$, for all $m_j\in\ccalS_i$, and their concatenation (Algorithm \ref{alg:plan}) gives rise to the motion plans $\tau_i$. In principle, the paths $p_{\ccalT_{m_j}}^k$ connect the current configuration of robots $i\in\ccalT_{m_j}$ to the closest meeting point $\ell_e\in\ccalC_{m_j}$ where all robots in $\ccalT_{m_j}$ meet. Therefore, while these paths minimize the total distance traveled between the current and next meeting points, they do not optimize the infinite horizon cost function \eqref{eq:cost1}. Consequently, the proposed solution is suboptimal. To obtain an optimal solution that minimizes \eqref{eq:cost1}, standard model checking techniques can be applied to the global product system, that are known to be computationally expensive. Suboptimality here is a consequence of problem decomposition.

\end{rem}

\begin{algorithm}[t]
\caption{Construction of motion plans $\tau_{{i}}=[p_{i}^k]_{k=1}^{\infty}$ at node $\ell_e\in\ccalC_m$}
\label{alg:plan}
\begin{algorithmic}[1]
\REQUIRE Already constructed finite paths $p_{j}^k$ of robots $j\in\ccalN_{i}$;
\REQUIRE All robots in $\ccalT_m$ are located at node $\ell_e\in\ccalC_m$; 
\STATE $p_{i}^{k,n_{i}^{\ccalT_m}}:=\Pi|_{\text{wTS}_{i}}p_{\ccalT_m}^k$, $n_{i}^{\ccalT_m}=n_{j}^{\ccalT_m},~\forall j\in\ccalT_m$;
\FOR {$g\in\ccalS_i\setminus\{m\}$}
\IF {there exist constructed paths $p_{j}^k$, $j\in\ccalN_i\cap\ccalT_g$}
\STATE $p_{i}^{k,n_{i}^{\ccalT_g}}:=\Pi|_{\text{wTS}_{i}}p_{\ccalT_g}^k$, $n_{i}^{\ccalT_g}=n_{j}^{\ccalT_g},~\forall j\in\ccalT_g$;
\ELSE
\STATE $p_{i}^{k,n_{i}^{\ccalT_g}}:=\Pi|_{\text{wTS}_{i}}p_{\ccalT_g}^k$ provided either $p_{j}^{k,n_{i}^{\ccalT_g}}=X$, or $p_{j}^{k,n_{i}^{\ccalT_g}}=\Pi|_{\text{wTS}_{j}}p_{\ccalT_h}^k$ with $h\notin\ccalN_{\ccalT_g}$, $~\forall j\in\ccalN_{i}$;
\ENDIF
\STATE Put $X$s in the remaining entries;
\STATE Transmit path $p_{i}^k$ to a robot in $\ccalT_m$ that has not constructed its motion plan. If there are not such robots, all robots $i\in\ccalT_m$ depart from node $\ell_e\in\ccalC_m$;
\ENDFOR
\end{algorithmic}
\end{algorithm}

\subsection{Correctness of the Proposed Algorithm}
In this section, we show that the composition of the discrete motion plans $\tau_{i}$ generated by Algorithm \ref{alg:plan} satisfies the global LTL expression \eqref{eq:globalLTL}, i.e., that the network is connected over time. To prove this result, we need first to show that Algorithm \ref{alg:plan} can develop non-conflicting motion plans $\tau_{i}$, for which we have the following two results. Proposition \ref{prop:ell} can be proved by following the steps of the proof of Proposition 3.2 in \cite{kantaros16acc} for the graph $\ccalG_{\ccalT}$. Proposition \ref{prop:confl} holds by construction of the finite paths $p_i^k$ and its proof, which is omitted due to space limitations, is along the lines of the proof of Proposition 3.3 in \cite{kantaros16acc}.

\begin{prop}\label{prop:ell}
Algorithm \ref{alg:plan} can always construct finite paths $p_{i}^k$ that consist of $\ell$ finite paths $p_i^{k,e}$ where  $\ell\leq\text{max}\left\{d_{\ccalT_m}\right\}_{m=1}^M+1$.
\end{prop}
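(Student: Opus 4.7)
The plan is to recast the index-assignment performed by Algorithm~\ref{alg:plan} as a proper vertex coloring problem on the team graph $\ccalG_{\ccalT}$, and then invoke the standard greedy-coloring upper bound $\Delta(\ccalG_{\ccalT})+1$ on the chromatic number.

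First I would observe that, by Rule~1 of Algorithm~\ref{alg:plan} (line 1) and its inductive use in Rule~2 (line 4), the index $n_{i}^{\ccalT_m}$ at which the sub-path $\Pi|_{\text{wTS}_{i}}p_{\ccalT_m}^k$ is placed in $p_{i}^k$ is in fact independent of the specific robot $i\in\ccalT_m$; it depends only on the team $m$. Denote this common index by $c(m)\in\{1,\dots,\ell\}$, so that $c$ is a function $c:\ccalV_{\ccalT}\to\{1,\dots,\ell\}$. Next I would verify that $c$ is a proper coloring of $\ccalG_{\ccalT}$, i.e., $c(m)\neq c(h)$ whenever $(m,h)\in\ccalE_{\ccalT}$. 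By definition of $\ccalE_{\ccalT}$, adjacency of $m$ and $h$ means there is some robot $r\in\ccalT_m\cap\ccalT_h$, so $r$ must place both $\Pi|_{\text{wTS}_{r}}p_{\ccalT_m}^k$ and $\Pi|_{\text{wTS}_{r}}p_{\ccalT_h}^k$ into $p_r^k$, and these two sub-paths must occupy distinct entries (otherwise robot $r$ would be required to be simultaneously at two different states in $\text{wTS}_{r}$). This is exactly what Rule~3 (line 6) enforces whenever $c(m)$ and $c(h)$ are not already pinned down by previous coordination, and what Rules~1--2 guarantee in the case where they are.

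With this correspondence in hand, the existence claim reduces to showing that a proper coloring of $\ccalG_{\ccalT}$ using at most $\max_{m}d_{\ccalT_m}+1$ colors can always be produced in the order in which Algorithm~\ref{alg:plan} assigns the values $c(m)$. I would argue this by a direct greedy argument: process the teams in the order induced by the sequence $\ccalS$ (which the algorithm follows when moving from one communication node to the next); when team $m$ is being colored for the first time, the already-colored teams that can constrain the choice of $c(m)$ are precisely the neighbors of $m$ in $\ccalG_{\ccalT}$ that have been processed earlier, and there are at most $d_{\ccalT_m}$ of these. Hence at most $d_{\ccalT_m}\le\max_m d_{\ccalT_m}$ values from $\{1,\dots,\ell\}$ are forbidden, so any choice of $\ell\ge \max_m d_{\ccalT_m}+1$ leaves a legal color available; in all other entries $X$ is placed as specified in line~8. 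This gives the advertised bound $\ell\le\max\{d_{\ccalT_m}\}_{m=1}^{M}+1$.

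The only subtle point, and what I expect to be the main obstacle, is to make the ``well-definedness'' of $c$ fully rigorous, namely that Rules 1--3 never force a team $m$ to receive two different indices at different moments of the algorithm. I would handle this by induction along the ordering $\ccalS$: when the algorithm revisits a team $m$ via a different robot $i'$, either some neighbor (or $i'$ itself) has already fixed $n_{j}^{\ccalT_m}$ and Rule~1/2 forces the same value, or no such constraint exists yet and Rule~3 only forbids indices that collide with neighbors $h\in\ccalN_{\ccalT_m}$, which is consistent with the greedy choice above. Combined with the footnote's initialization step for $\ell_1$, this shows the construction never aborts, completing the proof.
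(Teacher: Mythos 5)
Your proposal is correct and follows essentially the same route the paper intends: the paper omits the proof and instead points to Proposition 3.2 of \cite{kantaros16acc} applied to $\ccalG_{\ccalT}$, which is exactly the reduction you make --- the slot indices $n_i^{\ccalT_m}$ define a proper vertex coloring of $\ccalG_{\ccalT}$ (rules 1--2 make the color team-dependent only, rule 3 enforces properness), and the sequential greedy assignment along $\ccalS$ needs at most $\max_m d_{\ccalT_m}+1$ colors. Your attention to the well-definedness of the coloring under revisits is the right detail to pin down, and the argument goes through.
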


Proposition \ref{prop:ell} shows also that the finite paths $p_{i}^k$ and consequently, the motion plans $\tau_{i}$ depend on the node degree of graph $\ccalG_{\ccalT}$, and not on the size of the network. 

\begin{prop}\label{prop:confl}
Algorithm \ref{alg:plan} generates \textit{admissible} discrete motion plans $\tau_{i}$, i.e., motion plans that are free of conflicts and satisfy the transition rule $\rightarrow_{i}$ defined in Definition \ref{defn:wTS}.
\end{prop}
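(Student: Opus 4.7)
The plan is to establish admissibility by proving two separate claims corresponding to the two requirements in the statement: (i) the motion plans $\tau_i$ are free of conflicts, meaning that each finite path $p_i^k$ prescribes a well-defined state for robot $i$ at every entry index $e \in \{1,\dots,\ell\}$; and (ii) every consecutive pair of states in $\tau_i$ satisfies the transition relation $\rightarrow_i$ of $\text{wTS}_i$. I would argue both claims by induction on the order in which robots assemble their paths while Algorithm \ref{alg:plan} sweeps through the nodes of $\ccalS$.

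For the conflict-freeness claim, the task is to show that for each robot $i$ and each team $g \in \ccalS_i$ the algorithm assigns an index $n_i^{\ccalT_g}$ to $p_i^k$ that is distinct across distinct teams, so that the sub-paths $\Pi|_{\text{wTS}_i} p_{\ccalT_g}^k$ never overlap. Rule 1 (line 1) fixes $n_i^{\ccalT_m} = n_j^{\ccalT_m}$, which is common to all $j \in \ccalT_m$ by the inductive hypothesis. For any $g \in \ccalS_i \setminus \{m\}$, either Rule 2 (line 4) reuses an index $n_j^{\ccalT_g}$ previously committed to by some neighbor $j \in \ccalN_i \cap \ccalT_g$ (again common across $\ccalT_g$), or Rule 3 (line 6) selects a fresh free entry. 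Proposition \ref{prop:ell} guarantees the existence of such a free entry since $\ell \geq \max_m d_{\ccalT_m} + 1$. The remaining entries are filled with the wait symbol $X$, which imposes no state constraint beyond staying in place, so robot $i$ has a single prescribed state at each position of $p_i^k$.

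For the transition-rule claim, I would verify both local and global consistency. Locally, each sub-path $\Pi|_{\text{wTS}_i} p_{\ccalT_g}^k$ is the projection of $\tau_{\ccalT_g}^{\mathrm{pre},k}$, a valid path in $\text{wPTS}_{\ccalT_g}$; the product-transition rule $\frac{\bigwedge(q_{i_k} \xrightarrow{a_{i_k}} q_{i_k}')}{q_{\ccalT_g} \xrightarrow{a_{\ccalT_g}} q_{\ccalT_g}'}$ guarantees that the projection onto $\text{wTS}_i$ is itself a valid trajectory under $\rightarrow_i$. Each $X$ segment consists of wait actions available at every state by Definition~\ref{defn:wTS}. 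Globally, at each interface between consecutive sub-paths the construction explicitly sets $\Pi|_{\text{wTS}_i} p_{\ccalT_{m_j}}^k(1)$ equal to the terminal state of the preceding sub-path, so the concatenation forms a valid path; the padding with additional waits appended at the end of each $p_i^{k,e}$ for length equalization consists of self-loops at a fixed state and thus trivially respects $\rightarrow_i$.

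The main obstacle is the conflict-freeness argument under Rule 3: I need to show that the side condition ``$h \notin \ccalN_{\ccalT_g}$'' is simultaneously strong enough to prevent future conflicts and weak enough to leave a free entry available. Sufficiency follows because any future clash would require some robot $r \in \ccalT_g \cap \ccalT_h$ to place $\Pi|_{\text{wTS}_r} p_{\ccalT_g}^k$ and $\Pi|_{\text{wTS}_r} p_{\ccalT_h}^k$ at a common index, which in turn forces $g \in \ccalN_{\ccalT_h}$ in $\ccalG_{\ccalT}$ --- exactly the configuration the rule forbids. Feasibility is precisely what Proposition \ref{prop:ell} provides. Iterating the induction along the whole ordering $\ccalS$ and then across the index $k$ yields admissibility of the full infinite concatenation $\tau_i = [p_i^k]_{k=1}^{\infty}$.
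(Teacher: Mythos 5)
Your proposal is correct and follows essentially the route the paper intends: the paper omits this proof entirely, saying only that the result ``holds by construction of the finite paths $p_i^k$'' and deferring to Proposition 3.3 of \cite{kantaros16acc}, and your two-part argument --- well-definedness of the entries $n_i^{\ccalT_g}$ via the three placement rules together with Proposition \ref{prop:ell}, plus validity of transitions via projection of $\text{wPTS}_{\ccalT_g}$ paths, wait actions, and the chaining of sub-path endpoints --- is exactly that construction argument made explicit. The one step I would tighten is your sufficiency claim for Rule 3: you should check that when an index $n^{\ccalT_g}$ is committed, every previously committed index $n^{\ccalT_h}$ with $h\in\ccalN_{\ccalT_g}$ is actually visible to the committing robot through the already-constructed paths of its neighbors in $\ccalN_i$, which is where the ordering $\ccalS$ of the sweep does real work rather than being a mere bookkeeping device.
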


\begin{prop}\label{prop:satsynch}
The composition of motion plans $\tau_{i}$ generated by Algorithm \ref{alg:plan} satisfies the global LTL expression \eqref{eq:globalLTL}, i.e., connectivity of the robot network is ensured over time, infinitely often.
\end{prop}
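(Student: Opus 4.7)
The plan is to exploit the decomposition $\phi = \wedge_{m\in\ccalV_{\ccalT}}\phi_{\ccalT_m}$ and prove that the composition $\tau_{\ccalT_m}=\otimes_{i\in\ccalT_m}\tau_i$ satisfies $\phi_{\ccalT_m}$ for every $m$. Since the local formulas are coupled only by the conjunction, establishing $\tau_{\ccalT_m}\models\phi_{\ccalT_m}$ for all $m$ immediately yields $\otimes_{i=1}^{N}\tau_i\models\phi$, so the whole argument reduces to a per-team verification.

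First I would unpack the structure of $\tau_i$ given in \eqref{eq:planij}. By line 1 of Algorithm \ref{alg:plan}, every robot $i\in\ccalT_m$ places the sub-path $\Pi|_{\text{wTS}_i}p_{\ccalT_m}^k$ at exactly the same entry $n_i^{\ccalT_m}=n_j^{\ccalT_m}$, common to all $j\in\ccalT_m$. Combined with the padding performed at the end of Section \ref{sec:conflict} — which appends waiting states so that, for each $e$, all finite paths $p_i^{k,e}$ across robots have the same length — this guarantees that for every $k$ the robots in $\ccalT_m$ begin and finish executing $\Pi|_{\text{wTS}_i}p_{\ccalT_m}^k$ at the very same step in their respective wTS. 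By Proposition \ref{prop:confl} these plans are admissible, so the execution is well-defined in each $\text{wTS}_i$.

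Next I would use the construction of Section \ref{sec:autmodel}: because $p_{\ccalT_m}^k=\tau_{\ccalT_m}^{\text{pre},k}$ is the prefix of a plan $\tau_{\ccalT_m}^k\models\phi_{\ccalT_m}$ obtained from the PBA $P_{\ccalT_m}$, its final state in $\text{wPTS}_{\ccalT_m}$ corresponds to a configuration in which every robot $i\in\ccalT_m$ sits at a common communication point $\ell_{j_k}\in\ccalC_m$, i.e.\ $\wedge_{i\in\ccalT_m}\pi_i^{\ell_{j_k}}$ holds. By the synchronization established in the previous step, this atomic conjunction is actually true along the composed trace $\texttt{trace}(\tau_{\ccalT_m})$ at the step where the block indexed by $k$ completes. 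Since $k$ ranges over $1,2,\dots$ by the outer infinite concatenation in \eqref{eq:planij}, such a meeting occurs infinitely often, which is precisely the LTL requirement $\square\Diamond(\vee_{\ell_j\in\ccalC_m}\wedge_{i\in\ccalT_m}\pi_i^{\ell_j})=\phi_{\ccalT_m}$.

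The main obstacle I anticipate is rigorously justifying the synchronization claim, i.e.\ that the $X$-padding together with the common index $n_i^{\ccalT_m}$ truly aligns team members in time. This is where one must check that (a) the sub-paths $\Pi|_{\text{wTS}_i}p_{\ccalT_g}^k$ placed by other teams $g\in\ccalS_i$ cannot desynchronize robot $i$ relative to its teammates in $\ccalT_m$, and (b) the end-of-block padding preserves the fact that $p_i^k(1)$ coincides with the last state of $p_i^{k-1}$, so that consecutive blocks chain up consistently in every $\text{wTS}_i$. Conclusion (a) follows because the length-equalization is applied uniformly across all robots for every entry $e$, and (b) follows by the choice of $\Pi|_{\text{wTS}_i}p_{\ccalT_{m_j}}^k(1)$ described right after \eqref{eq:planij}. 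Once these two observations are made precise, the per-team meeting at step $n_i^{\ccalT_m}$ of block $k$ is immediate, and repeating the argument for every $m\in\ccalV_{\ccalT}$ completes the proof that $\otimes_{i=1}^{N}\tau_i\models\phi$.
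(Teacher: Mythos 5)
Your proposal is correct and follows essentially the same route as the paper's proof: reduce to per-team satisfaction of $\phi_{\ccalT_m}$, use the common slot index $n_i^{\ccalT_m}$ together with the length-equalizing padding to show that all robots in $\ccalT_m$ execute $\Pi|_{\text{wTS}_i}p_{\ccalT_m}^k$ in lockstep and reach the meeting configuration at the end of every block $k$. The only cosmetic difference is the final step: the paper argues that $\sigma_{\ccalT_m}=w_{\ccalT_m}^0w_{\ccalT_m}^1\dots$ is a subsequence of $\texttt{trace}(\tau_{\ccalT_m,*})$ whose interleaved states cannot violate $\phi_{\ccalT_m}$ because the formula contains no negated atomic propositions, whereas you observe directly that a pure recurrence property $\square\Diamond(\cdot)$ only requires the meeting to hold at infinitely many positions; both are valid here.
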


\begin{proof}
To prove this result, it suffices to show that $\tau_{\ccalT_m,*}\models\phi_{\ccalT_m}$, for all teams $m\in\ccalV_{\ccalT}$, where $\phi_{\ccalT_m}$ is defined in \eqref{eq:localLTL} and $\tau_{\ccalT_m,*}=\otimes_{\forall i\in\ccalT_m}\tau_{i}$, where the motion plans $\tau_i$ are generated by Algorithm \ref{alg:plan}. Equivalently, according to Definition \ref{def:motionplan} it suffices to show that
\begin{equation}\label{eq:trace0}
\texttt{trace}(\tau_{\ccalT_m,*})\in\texttt{Words}(\phi_{\ccalT_m}).
\end{equation}
First, assume that the prefix part $\tau_{\ccalT_m}^{\text{pre},k}$ of the motion plan $\tau_{\ccalT_m}^k=\tau_{\ccalT_m}^{\text{pre},k}[\tau_{\ccalT_m}^{\text{suf},k}]^{\omega}\models\phi_{\ccalT_m}$ has been constructed, for all $k$, using a standard model checking method as described in Section \ref{sec:autmodel}.\footnote{Note that, throughout this proof, the only difference between the plans $\tau_{\ccalT_m,*}$ and $\tau_{\ccalT_m}$ is that the first one has been derived through composing all non-conflicting motion plans $\tau_{i}$, $\forall i\in\ccalT_m$ generated by Algorithm \ref{alg:plan}, while the second one is the plan for all robots $i\in\ccalT_m$ computed as described in Section \ref{sec:autmodel}.} By construction of the motion plans $\tau_i$, it holds that there is an index $n_i^{\ccalT_m}$, $1\leq n_i^{\ccalT_m}\leq \ell$, such that $p_i^{k,n_i^{\ccalT_m}}=\Pi|_{\text{wTS}_i}p_{\ccalT_m}^k$ and an index $e_i^{\ccalT_m}$, which are common for all robots $i\in\ccalT_m$, such that $p_i^k(e_i^{\ccalT_m})=\Pi|_{\text{wTS}_i}p_{\ccalT_m}^k(1)$ for all $i\in\ccalT_m$. Consequently, in the finite path $\pi_{\ccalT_m,*}^{k}=\otimes_{\forall i\in\ccalT_m}p_{i}^k$ there are indices $n^{\ccalT_m}$ and $e^{\ccalT_m}$ such that $\pi_{\ccalT_m,*}^{k,n^{\ccalT_m}}=p_{\ccalT_m}^k$ and $\pi_{\ccalT_m,*}^{k}(e^{\ccalT_m})=p_{\ccalT_m}^k(1)$ for all $k$. Hence, we conclude that all robots $i\in\ccalT_m$ will start executing the finite path $p_{\ccalT_m}^k$, simultaneously.  

Using the above observation, we examine the properties of the finite word that is generated by the team $\ccalT_m$ when the finite path $p_{\ccalT_m}^k$ is executed. Let $w_{\ccalT_m}^k\in(2^\mathcal{AP})^*$ be a finite word defined as $w_{\ccalT_m}^k=\texttt{trace}(\tau_{\ccalT_m}^{\text{pre},k})$ or, equivalently, by construction of the finite paths $p_{\ccalT_m}^k$, $w_{\ccalT_m}^k=\texttt{trace}(p_{\ccalT_m}^k)$. Since $\tau_{\ccalT_m}^k\models\phi_{\ccalT_m}$ and by construction of the prefix part $\tau_{\ccalT_m}^{\text{pre},k}$ for all $k$, we have that the infinite word $\sigma_{\ccalT_m}=w_{\ccalT_m}^0w_{\ccalT_m}^1w_{\ccalT_m}^2\dots\in(2^\mathcal{AP})^{\omega}$ satisfies
\begin{equation}\label{eq:sigmaTm}
\sigma_{\ccalT_m}\in\texttt{Words}(\phi_{\ccalT_m}). 
\end{equation}
Next, notice that between the execution of the finite paths $p_{\ccalT_m}^k$ and $p_{\ccalT_m}^{k+1}$ robots $i\in\ccalT_m$ will traverse through some states of their respective wTSs that are determined by $\tau_{\ccalT_m,*}$ until the state $p_{\ccalT_m}^{k+1}(1)$ is reached. In other words, the team of robots $\ccalT_m$ does not execute consecutively the finite paths $p_{\ccalT_m}^k$ and $p_{\ccalT_m}^{k+1}$. Now, we need to show that these intermediate transitions that robots make are admissible in $\text{wPTS}_{\ccalT_m}$ and do not violate $\phi_{\ccalT_m}$. These transitions are admissible as implied by Proposition \ref{prop:confl}. Also, they cannot violate $\phi_{\ccalT_m}$, since the LTL expressions $\phi_{\ccalT_m}$ in \eqref{eq:localLTL}, for all $m\in\ccalV_{\ccalT}$, do not include the negation operator $\neg$ in front of the atomic propositions $\pi_{{i}}^{\ell_j}$ defined in Section \ref{sec:intermittent}. Therefore, this equivalently means that as the robots $i\in\ccalT_m$ execute the motion plan $\tau_{\ccalT_m,*}$ they will eventually pass through all states determined by the finite paths $p_{\ccalT_m}^k$, for all $k$, without violating $\phi_{\ccalT_m}$. Consequently, this means that as the robots $i\in\ccalT_m$ move according to the motion plan $\tau_{\ccalT_m,*}$, the generated trace $\texttt{trace}(\tau_{\ccalT_m,*})$ will certainly include the atomic propositions that are included in $\sigma_{\ccalT_m}$; see also Definition \ref{def:tracet}. In other words, the sequence of atomic propositions $\sigma_{\ccalT_m}$ is a subsequence of $\texttt{trace}(\tau_{\ccalT_m,*})$ while the additional atomic propositions that exist in $\texttt{trace}(\tau_{\ccalT_m,*})$ cannot violate $\phi_{\ccalT_m}$, as previously discussed. Therefore, due to \eqref{eq:sigmaTm}, we conclude that \eqref{eq:trace0} holds for all $m\in\ccalV_{\ccalT}$, which completes the proof.\end{proof}

In general, the motion plans $\tau_{i}=[\tau_i(n)]_{n=1}^{\infty}$ defined in \eqref{eq:planij} are infinite paths of $\text{wTS}_{i}$, since the finite paths $p_i^k$ need to be updated for every $k\in\mathbb{N}$. Therefore, in practice they are hard to implement and manipulate. In the following proposition, we show that the motion plans $\tau_{i}$ constructed by Algorithm \ref{alg:plan} have a finite representation and they can be expressed in a prefix-suffix structure, where the prefix part $\tau_{i}^{\text{pre}}$ is traversed only once and the suffix part $\tau_{i}^{\text{suf}}$ is repeated infinitely.

\begin{prop}\label{prop:presuf}
Algorithm \ref{alg:plan} generates discrete motion plans $\tau_{i}$ for all robots $i$ in a prefix-suffix structure, i.e., $\tau_{i}=\tau_{i}^{\text{pre}}\left[\tau_{i}^{\text{suf}}\right]^{\omega}=[p_i^1\dots p_i^{k_p-1}][p_i^{k_p}\dots p_i^{k_s}]^{\omega}$.
\end{prop}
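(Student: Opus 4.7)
The plan is to argue via the pigeonhole principle applied to a finite ``configuration space'' whose states fully determine the next finite path $p_i^k$ that Algorithm \ref{alg:plan} produces. First I would identify what data the algorithm actually needs in order to output $p_i^k$: as noted in the paragraph preceding Algorithm \ref{alg:plan}, the prefix part $p_{\ccalT_{m_j}}^k=\tau^{\text{pre},k}_{\ccalT_{m_j}}$ is recomputed only because the initial state $q^{0,k}_{\ccalT_{m_j}}=p_{\ccalT_{m_j}}^k(1)$ of $\text{wPTS}_{\ccalT_{m_j}}$ changes with $k$, where this initial state is determined as the terminal state of the previous sub-path appearing in $p_i^{k-1}$. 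Hence the vector
\[
\bbq^k \;=\; \bigl(q^{0,k}_{\ccalT_1},\,q^{0,k}_{\ccalT_2},\dots,q^{0,k}_{\ccalT_M}\bigr)
\]
fully determines, together with the deterministic rules of Algorithm \ref{alg:plan}, the block $p_i^k$ produced for every robot $i$, and hence also determines $\bbq^{k+1}$.

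Next I would observe that $\bbq^k$ lives in the finite set $\prod_{m=1}^M \ccalQ_{\ccalT_m}$, since each $\ccalQ_{\ccalT_m}=\ccalQ_{i_1}\times\cdots\times\ccalQ_{i_{|\ccalT_m|}}$ is a finite Cartesian product of the finite state sets of the underlying $\text{wTS}_i$. Since the map $\bbq^k\mapsto\bbq^{k+1}$ is a deterministic self-map of a finite set (any remaining nondeterminism in Algorithm \ref{alg:plan}, such as tie-breaking among ``free entries'' in line 6 or the choice of the first robot to construct its path, can be resolved by fixing a canonical ordering consistent across iterations), the pigeonhole principle guarantees indices $k_p<k_s$ with $\bbq^{k_p}=\bbq^{k_s+1}$. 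By determinism, it then follows inductively that $p_i^{k_p+j}=p_i^{k_s+1+j}$ for every $j\ge 0$ and every robot $i$, which is exactly the periodicity needed to write
\[
\tau_i \;=\; \underbrace{p_i^1\,p_i^2\cdots p_i^{k_p-1}}_{\tau_i^{\text{pre}}}\,\bigl[\,\underbrace{p_i^{k_p}\,p_i^{k_p+1}\cdots p_i^{k_s}}_{\tau_i^{\text{suf}}}\,\bigr]^{\omega}.
\]

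I expect the main obstacle to be making the algorithm genuinely deterministic as a function of $\bbq^k$. The rules in lines 1 and 4 of Algorithm \ref{alg:plan} are already forced, since the indices $n_i^{\ccalT_m}$ and $n_i^{\ccalT_g}$ must agree across teammates. The potentially ambiguous step is line 6, where the placement of $\Pi|_{\text{wTS}_{i}}p_{\ccalT_g}^k$ is only required to land in some entry compatible with the neighbor constraint; here I would fix a canonical rule (e.g., the smallest admissible index under the ordering $\ccalS$) so that the output depends only on $\bbq^k$ and on the fixed global data $(\ccalG,\ccalG_{\ccalT},\ccalS)$. A final minor point to address is that the model-checking step producing $\tau^{\text{pre},k}_{\ccalT_{m_j}}$ may itself admit multiple shortest paths; we simply fix any tie-breaking rule on the product automaton $P_{\ccalT_{m_j}}$ so that the optimal prefix path is a function of $q^{0,k}_{\ccalT_{m_j}}$ alone. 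With these conventions in place, the pigeonhole argument above yields the claimed prefix-suffix representation.
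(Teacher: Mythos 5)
Your proposal is correct and takes essentially the same route as the paper: both arguments rest on the finiteness of the set of possible initial states $q^{0,k}_{\ccalT_m}$ of the product systems, apply the pigeonhole principle to find two indices where the configuration repeats, and then invoke determinism of the path construction to conclude periodicity of the blocks $p_i^k$. Your version packages this slightly more cleanly by applying pigeonhole once to the full configuration vector over all teams and by making explicit the tie-breaking conventions (in line 6 of Algorithm \ref{alg:plan} and in the shortest-path computation on $P_{\ccalT_m}$) that the paper's proof leaves implicit.
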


\begin{proof}
To show this result it suffices to show that there is an index $k_s$, such that for all $k\geq k_s$, the finite paths $p_i^k$ are repeated and, therefore, they do not need to be recomputed. First, notice that such an index $k_s$ is common for all motion plans $\tau_i$. To illustrate this point assume that there is a robot $i\in\ccalT_m$ such that $\Pi|_{\text{wTS}_i}p_{\ccalT_m}^k\neq \Pi|_{\text{wTS}_i}p_{\ccalT_m}^{k+1}$ and, therefore, $p_i^k\neq p_i^{k+1}$. Since the paths $p_{\ccalT_m}^k$ are computed collectively by all robots $i\in\ccalT_m$, the fact that $\Pi|_{\text{wTS}_i}p_{\ccalT_m}^k\neq \Pi|_{\text{wTS}_i}p_{\ccalT_m}^{k+1}$ means that there may be another robot $j\in\ccalT_m$ such that $\Pi|_{\text{wTS}_j}p_{\ccalT_m}^k\neq \Pi|_{\text{wTS}_j}p_{\ccalT_m}^{k+1}$ and, consequently, $p_j^k\neq p_j^{k+1}$, which may be propagated to all other robots in the network, since graph $\ccalG_{\ccalT}$ is connected. Therefore, the index $k_s$ is common for all robots.

In what follows, we show the existence of an index $k_s$ defined above. By definition the finite paths $p_i^k$ are a concatenation of $\ell$ finite paths, where the $e$-th finite path in $p_i^k$ is denoted by $p_i^{k,e}$. Also, by construction we have that $p_i^{k,e}\neq X$ is constructed by computing a shortest path in a graph associated with a PBA $P_{\ccalT_m}$ from an initial to a final state in $P_{\ccalT_m}$. Notice that the set of final states $\ccalF_{P_{\ccalT_m}}$ in $P_{\ccalT_m}$ remain the same for all $k$, which is not the case for the set of initial states $\ccalQ^0_{P_{\ccalT_m}}=q^{0,k}_{\ccalT_m}\times\ccalQ_{B_{\ccalT_m}}^0$. The reason is that the initial state $q^{0,k}_{\ccalT_m}$ of $\text{wPTS}_{\ccalT_m}$ may change over $k$ and, specifically, we have that $q^{0,k}_{\ccalT_m}=p_{\ccalT_m}^k(1)=(p_{i_1}^{k,e}(1),\dots,p_{i_{|\ccalT_m|}}^{k,e}(1))$, where the construction of state $p_{\ccalT_m}^k(1)$ was presented in Section \ref{sec:autmodel}. Since the number of possible combinations of robots' states in their respective wTS is finite, we have that there is a finite number of possible initial states $q^{0,k}_{\ccalT_m}$ for all $m\in\ccalV_{\ccalT}$ and, consequently, a finite number of possible initial states $q^{0,k}=(p_{1}^{k,e}(1),\dots,p_{N}^{k,e}(1))$. Therefore, for any $e\in\{1,\dots,\ell\}$ there are always two indices $k_1^e$ and $k_2^e$, $k_2^e>k_1^e$ such that $p_i^{k_1^e,e}(1)=p_i^{k_2^e,e}(1)$, for all robots $i$, since otherwise that would mean that there are infinite number of possible $q^{0,k}=(p_{1}^{k,e}(1),\dots,p_{N}^{k,e}(1))$ for any $e\in\{1,\dots,\ell\}$. Since $p_i^{k_1^e,e}(1)=p_i^{k_2^e,e}(1)$ and the distances between communication points remains the same for all $k$, we have that $p_i^{k_1^e,e}=p_i^{k_2^e,e}$. Then, the finite paths $p_i^{k,e}$, for all $e\in\{1,2,\dots,\ell\}$, for all robots $i$, and for all $k\geq k_2^e$ have already been constructed at $k$ that satisfies $k_1^e\leq k \leq k_2^e$; for instance, for $k=k_2^e+1$ it holds that $p_i^{k_2^e+1,e}=p_i^{k_1^e+1,e}$. Let $e^*=\argmin_e\{k_2^e\}_{e=1}^{\ell}=\argmin_e\{k_1^e\}_{e=1}^{\ell}$. Then, similarly, we have that the finite paths $p_i^{k,e}$, for all $e\in\{1,2,\dots,\ell\}$, for all robots $i$, and for all $k\geq k_2^{e*}$ have already been constructed at $k$ that satisfies $k_1^{e^*}\leq k \leq k_2^{e^*}$. Consequently, this means that the finite paths $p_i^k$ for all $k\geq k_2^{e^*}$ have already been constructed as well at $k$ that satisfies $k_1^{e^*}\leq k \leq k_2^{e^*}$; for instance $p_i^{k_1^{e^*}}=p_i^{k_2^{e^*}}$ and $p_i^{k_1^{e^*}+1}=p_i^{k_2^{e^*}+1}$. Hence, the motion plans $\tau_i$ can be written in a prefix-suffix structure $\tau_{i}=\tau_{i}^{\text{pre}}\left[\tau_{i}^{\text{suf}}\right]^{\omega}$ where $\tau_{i}^{\text{pre}}=[p_i^1\dots p_i^{k_1^{e^*}-1}]$ and $\tau_{i}^{\text{suf}}=[p_i^{k_1^{e^*}}\dots p_i^{k_2^{e^*}-1}p_i^{k_2^{e^*}}]$, i.e., $k_p=k_1^{e^*}$ and $k_s=k_2^{e^*}$, which completes the proof.
\end{proof}


\section{Asynchronous Intermittent Communication}\label{sec:asynch}

In section \ref{sec:control}, we showed that if all robots $i$ pick synchronously their next states in $\text{wTS}_{i}$ according to the motion plans $\tau_{i}$, then the LTL expression \eqref{eq:globalLTL} is satisfied. In this section, we show that the generated motion plans can be executed asynchronously, as well, by appropriately introducing delays in the continuous-time execution of $\tau_{i}$. We omit a formal proof of this result due to space limitations, and instead validate the proposed asynchronous scheme through numerical simulations in Section \ref{sec:sim}.

Due to the asynchronous execution of the controllers, the motion plans $\tau_{i}$ can be written as in \eqref{eq:planij} replacing the indices $n$ and $k$ with $n_i$ and $k_i$, respectively, which allows us to model the situation where the robots pick asynchronously their next states in $\text{wTS}_{i}$.   
In the asynchronous execution of the infinite paths $\tau_{i}$ robot $i$ moves from state $\tau_{i}(n_{i}-1)\in\ccalQ_i$ to $\tau_{i}(n_{i})\in\ccalQ_i$ according to a continuous-time motion controller $\bbu_{i}(t)\in\mathbb{R}^n$ that belongs to the tangent space of $\gamma_{ij}$ at $\bbx_{i}(t)$. Without loss of generality, assume that $i\in\ccalT_m$ and $\tau_{i}(n_{i})=q_i^{\ell_e}$, for some $\ell_e\in\ccalC_m$. When robot $i$ arrives at state $\tau_{i}(n_{i})$ it checks if $\tau_{j}(n_i)=\Pi|_{\text{wTS}_j}p_{\ccalT_m}(f_i)=q_j^{\ell_e}$ for all $j\in\ccalT_m$. If this is the case, then robot $i$ waits at node $\ell_e$ until all other robots $j\in\ccalT_m$ are present there. When this happens, or if there is at least one robot $j\in\ccalT_m$ such that $\tau_{j}(n_i)\neq\Pi|_{\text{wTS}_j}p_{\ccalT_m}(f_i)=q_j^{\ell_e}$, then robot $i$ moves towards the next state $\tau_{i}(n_{i}+1)$.


\section{Simulation Studies}\label{sec:sim}

             
In this section, a simulation study is provided that illustrates our approach for a network of $N=5$ robots that move along the edges of the mobility graph with $L=20$ communication points as shown in Figure \ref{sim1}. The network is divided in $M=5$ teams which are $\ccalT_1=\{1,2\}$, $\ccalT_2=\{2,3\}$, $\ccalT_3=\{3,4\}$, $\ccalT_4=\{2,4,5\}$, and $\ccalT_5=\{1,5\}$ and, therefore, the graph $\ccalG_{\ccalT}$ is as shown in Figure \ref{graphgt}. Also, the mobility graph is constructed so that there is a path $\gamma_{ij}$ from any node $\ell_i$ to any other node $\ell_j$. Therefore, the finite paths $p_{\ccalT_m}^k$ constructed as per Section \ref{sec:autmodel} have the form $p_{\ccalT_m}^k=q^{0,k}_{\ccalT_m}q_{\ccalT_m}$, where $q^{0,k}_{\ccalT_m}$ is the initial state of $\text{wPTS}_{\ccalT_m}$ constructed as defined in Section \ref{sec:conflict} and $q_{\ccalT_m}=(q_{i_1}^{\ell_{j}}\dots q_{i_{|\ccalT_m|}}^{\ell_{j}})$ is a state where all robots of team $\ccalT_m$ are located at a common meeting point $\ell_j\in\ccalC_m$. The motion plans $\tau_i$ generated by Algorithm \ref{alg:plan} have the following structure: 

{\small\begin{align}\label{plans_robots}
\tau_{1}=&\left[p_{1}^{k_1}\right]_{k_1=1}^{\infty}=\left[\Pi|_{\text{wTS}_{1}}\tau_{\ccalT_1}^{k_1}\Pi|_{\text{wTS}_{1}}\tau_{\ccalT_5}^{k_1}X\right]_{k_1=1}^{\infty},\nonumber\\
\tau_{2}=&\left[p_{2}^{k_2}\right]_{k_2=1}^{\infty}=\left[\Pi|_{\text{wTS}_{2}}\tau_{\ccalT_1}^{k_2}\Pi|_{\text{wTS}_{2}}\tau_{\ccalT_2}^{k_2}\Pi|_{\text{wTS}_{2}}\tau_{\ccalT_4}^{k_2}\right]_{k_2=1}^{\infty},\nonumber\\
\tau_{3}=&\left[p_{3}^{k_3}\right]_{k_3=1}^{\infty}=\left[\Pi|_{\text{wTS}_{3}}\tau_{\ccalT_3}^{k_3}\Pi|_{\text{wTS}_{3}}\tau_{\ccalT_2}^{k_3}X\right]_{k_3=1}^{\infty},\nonumber\\
\tau_{4}=&\left[p_{4}^{k_4}\right]_{k_4=1}^{\infty}=\left[\Pi|_{\text{wTS}_{4}}\tau_{\ccalT_3}^{k_4}X\Pi|_{\text{wTS}_{4}}\tau_{\ccalT_4}^{k_4}\right]_{k_4=1}^{\infty},\nonumber\\
\tau_{5}=&\left[p_{5}^{k_5}\right]_{k_5=1}^{\infty}=\left[X\Pi|_{\text{wTS}_{5}}\tau_{\ccalT_5}^{k_5}\Pi|_{\text{wTS}_{5}}\tau_{\ccalT_4}^{k_5}\right]_{k_5=1}^{\infty},\nonumber
\end{align}}

which can be written in a prefix-suffix structure with $k_s=2$ and $k_p=1$, where the indices $k_s$ and $k_p$ are defined in Proposition \ref{prop:presuf}. The motion plans $\tau_i$ defined above are depicted in Figure \ref{sim1}. Notice that the distances between any two meeting points vary across $\ccalG$ and so do the robots' velocities and, therefore, robots pick asynchronously their next states in $\text{wTS}_i$. Consequently, this results in waiting times for every robot $i\in\ccalT_m$ at the meeting points $\ell_j\in\ccalC_m$, which are non-integer multiples of each other, for all $m\in\ccalV_{\ccalT}$. This illustrated in Figure \ref{fig:wait}, where, e.g., in team $\ccalT_4$, robots $2$, $4$, and $5$ wait at a meeting point $\ell_j\in\ccalC_4$ for $0$, $3.4$, and $0.8$ time units, respectively. Observe also in Figure \ref{fig:wait} that robot $2$ never waits at any meeting point.

To illustrate that under the proposed motion plans connectivity is ensured over time, we implement a consensus algorithm over the dynamic network $\ccalG_c$. Specifically, we assume that initially robots generate a random number $v_{i}(t_0)$ and when all robots $i\in\ccalT_m$ meet at $\ell_j\in\ccalC_m$ they perform the following consensus update: $v_{i}(t)=\frac{1}{\left|\ccalT_m\right|}\sum_{e\in\ccalT_m}v_{e}(t)$. Figure \ref{fig:cons} shows that eventually all robots reach a consensus on the numbers $v_{i}(t)$. Note also that applying existing LTL-based planning would result in PBA constructed for all robots with $\Pi_{i=1}^N|\ccalQ_i||\ccalQ_{B}|=45360|\ccalQ_B|$ states, where $\ccalQ_B$ corresponds to the state-space of the NBA associated with the global LTL expression in \eqref{eq:globalLTL}, which is hard to manipulate in practice. This issue becomes more severe as the size of the network increases.
\begin{figure}[t]
  \centering
     \includegraphics[width=0.7\linewidth]{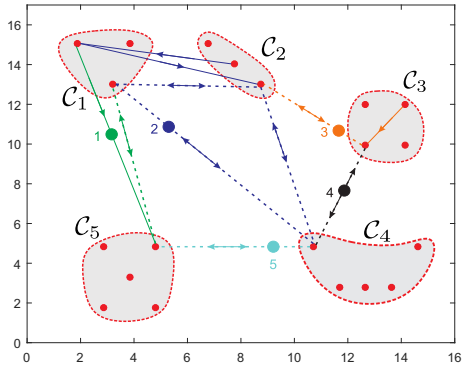}
  \caption{Intermittent communication of $N=5$ robots moving along the edges of an underlying mobility graph. Red dots represent communication points in each set $\ccalC_m$ while edges between any two communication points exist (not shown). Straight lines depict the robots' trajectories as determined by motion plans $\tau_i$. The suffix part of $\tau_i$ is represented by dashed lines for every robot $i$ while solid lines depict a part of prefix structure, i.e., a path that connects robots' initial states to the respective suffix structure. The prefix part of robots 4 and 5 coincides with their respective suffix part and, therefore, there are no corresponding solid lines for them.}
  \label{sim1}
\end{figure}
\begin{figure}[t]
  \centering
     \subfigure[Waiting Time]{
    \label{fig:wait}
  \includegraphics[width=0.47\linewidth]{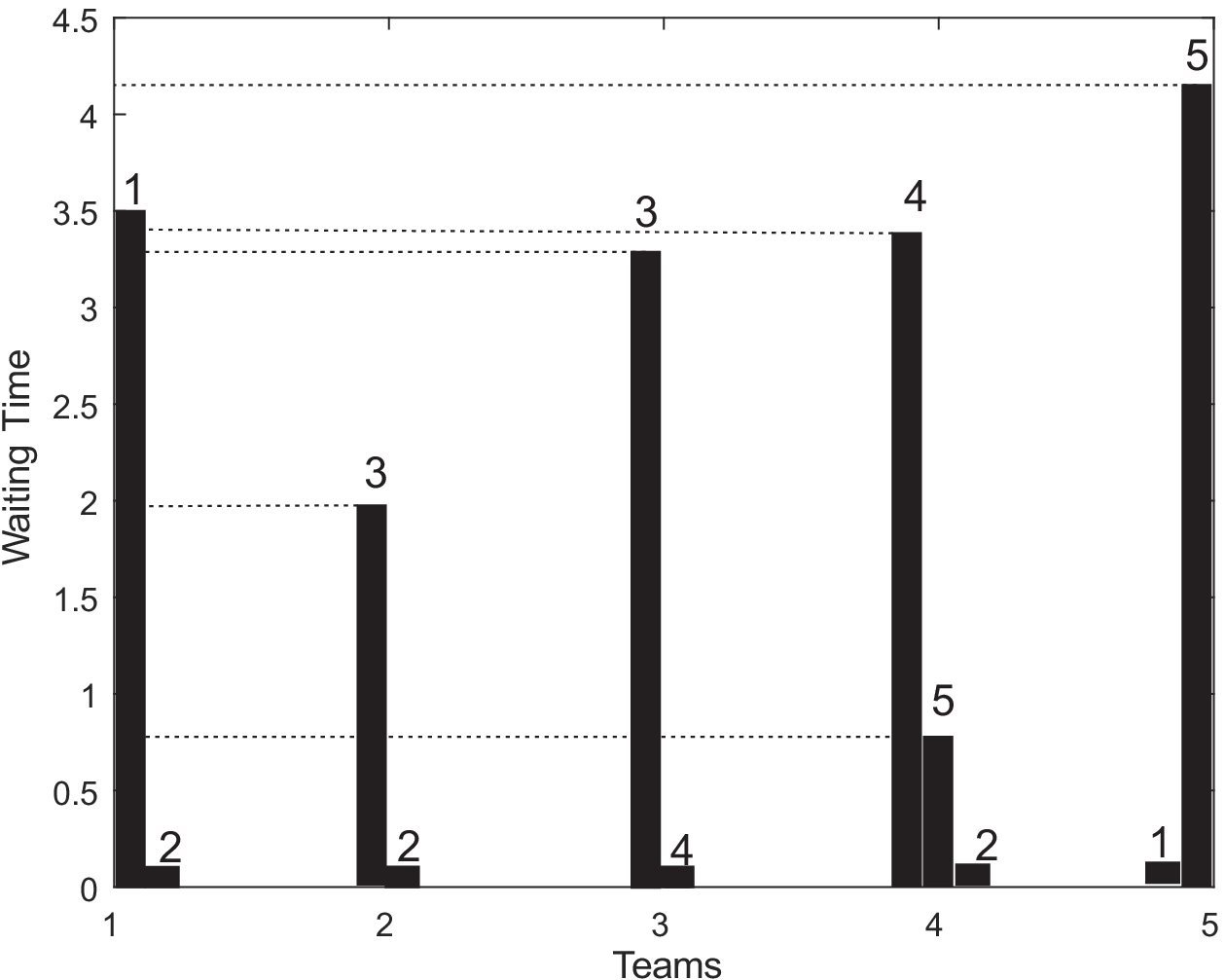}}
  \subfigure[Consensus]{
    \label{fig:cons}
  \includegraphics[width=0.46\linewidth]{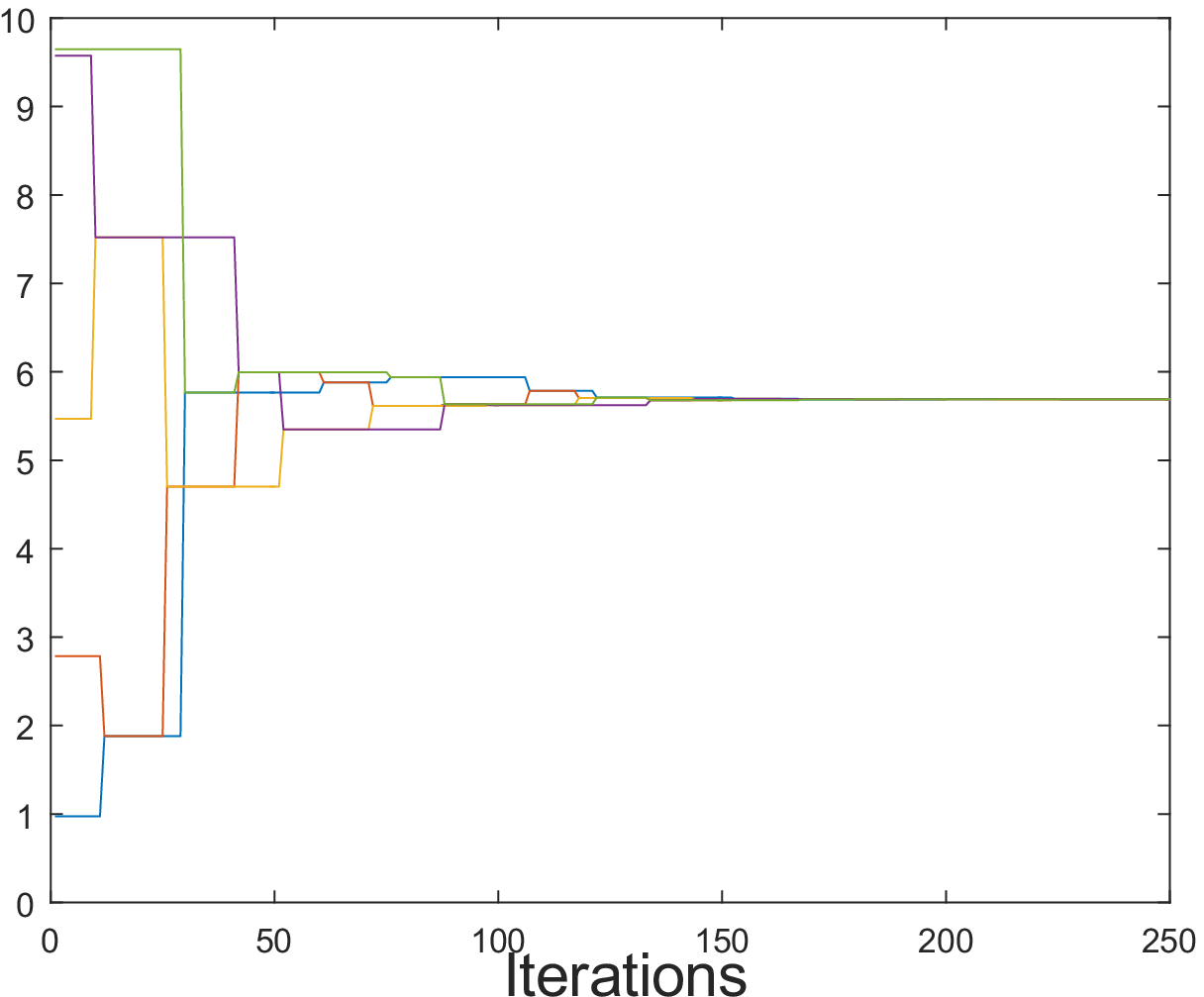}}
  \caption{Figure \ref{fig:wait} depicts the waiting time of every robot $i$ at a meeting point $\ell_j\in\ccalC_m$, for all $m\in\ccalS_i$ during a single execution of $\tau_i^{\text{suf}}$. Figure \ref{fig:cons} illustrates the consensus of numbers $v_{i}(t)$.}
  \label{fig:plots}
\end{figure}
\section{Conclusion}\label{sec:conclusion}
In this paper we considered the problem of controlling intermittent communication in mobile robot networks. We assumed that robots move along the edges of a mobility graph and they can communicate only when they meet at its nodes, which gave rise to a dynamic communication network. The network was defined to be connected over time if communication takes place at the rendezvous points infinitely often which was captured by an LTL formula. Then this LTL expression was approximately decomposed into local LTL expressions which were assigned to robots. To avoid conflicting robot behaviors that could occur due to this approximate decomposition, we developed a distributed conflict resolution scheme that generated free-of-conflicts discrete motion plans for every robot that ensured connectivity over time, infinitely often and minimized the distance traveled by the robots. We also showed that the generated motion plans can be executed asynchronously by introducing delays in their continuous-time execution. 
\bibliographystyle{IEEEtran}
\bibliography{YK_bib}
\end{document}